\theoremstyle{plain}
\newtheorem{theorem}{Theorem}
\newtheorem{definition}[theorem]{Definition}
\newtheorem{corollary}[theorem]{Corollary}
\newtheorem{lemma}[theorem]{Lemma}
\newtheorem{proposition}[theorem]{Proposition}
\theoremstyle{remark}
\newtheorem{property}{Property}
\title{The graphs with the max-Mader-flow-min-multiway-cut property}
\author{
Vincent Jost\footnote{CNRS, LIX, École Polytechnique, Palaiseau, France, \texttt{vincent.jost@lix.polytechnique.fr}.} \and 
Guyslain Naves\footnote{Department of Mathematics and Statistics, McGill University, Montréal, Canada, \texttt{naves@math.mcgill.ca}.}
}
\date{}
\newcommand{\point}[1]{\node[circle,inner sep = 0pt,minimum size =0pt]}
\newcommand{\Vertex}[1]{\node[circle,inner sep = 0pt,minimum size =5pt,fill = #1]}
\newcommand{\Term}[1]{\node[rectangle,inner sep = 0pt,minimum size =5pt,fill = #1]}
\definecolor{dgreen}{rgb}{0,0.6,0}
\definecolor{dred}{rgb}{0.6,0,0}
\definecolor{dblue}{rgb}{0,0,0.6}
\definecolor{lgray}{rgb}{0.8,0.8,0.8}
\definecolor{lblue}{rgb}{0.6,0.6,1}
\definecolor{lgreen}{rgb}{0.6,1,0.6}
\definecolor{lred}{rgb}{1,0.6,0.6}
\DeclareMathOperator{\repr}{r}
\newcommand{\signed}{\overline{\mathcal{A}_3}}
\begin{document}

\maketitle

\begin{abstract}
We are given a graph $G$, an independant set $\mathcal{S} \subset V(G)$ of \emph{terminals}, and a function $w:V(G) \to \mathbb{N}$. We want to know if the maximum $w$-packing of vertex-disjoint paths with extremities in $\mathcal{S}$ is equal to the minimum weight of a vertex-cut separating $\mathcal{S}$. We call \emph{Mader-Mengerian} the graphs with this property for each independant set $\mathcal{S}$ and each weight function $w$. We give a characterization of these graphs in term of forbidden minors, as well as a recognition algorithm and a simple algorithm to find maximum packing of paths and minimum multicuts in those graphs.  
\end{abstract}

\section{Introduction}


Given a graph $G=(V,E)$, a set $\mathcal{S} \subset V$ with $|\mathcal{S}| \geq$ $2$ and inducing a stable set is called a set of \emph{terminals}. An \emph{$\mathcal{S}$-path} is a path having distinct ends in $\mathcal{S}$, but inner nodes in $V \setminus \mathcal{S}$. A set ${\mathcal{P}}$ of $\mathcal{S}$-paths, is a \emph{packing of vertex-disjoint $\mathcal{S}$-paths} (since there is no risk of confusion, we will use the shorter term \emph{packing of $\mathcal{S}$-paths} within this paper), if two paths in ${\mathcal{P}}$ do not have a vertex in common in $V \setminus \mathcal{S}$. We are looking for a maximum number $\nu(G,\mathcal{S})$ of $\mathcal{S}$-paths in a packing.

An \emph{${\mathcal{S}}$-cut} is a set of vertices in $V \setminus \mathcal{S}$ that disconnect all the pairs of vertices in $\mathcal{S}$ (that is a blocker of the $\mathcal{S}$-paths). We are looking for an ${\mathcal{S}}$-cut with a minimum number $\kappa(G,\mathcal{S})$ of vertices.

The following inequality holds for any graph $G$ and any ${\mathcal{S}}\subseteq V(G)$: $\nu(G,\mathcal{S})\leq \kappa(G,\mathcal{S})$, as any $\mathcal{S}$-path intersects any $\mathcal{S}$-cut. Note that if $|\mathcal{S}| =$ $2$ the equality always holds, being Menger's vertex-disjoint undirected $(s,t)$-paths theorem. This paper deals with graphs for which $\nu(G,{\mathcal{S}})=\kappa(G,{\mathcal{S}})$, for any set $\mathcal{S}$ of terminals. Actually, we try to characterize a stronger property associated with a weighted version of these two optimization problems. Consider the following system with variables $x\in \mathbb{R}^{V\setminus \mathcal{S}}_+$:

\begin{equation}\label{eqn:blocking}
x(P)\geq 1 \mbox{ for every } {\mathcal{S}}\mbox{-path } P
\end{equation}

An integral vector $x$ minimizing $wx$ over~\eqref{eqn:blocking} is necessarily
a ${0,1}$-vector and is the characteristic vector of a minimum
$\mathcal{S}$-cut. Dually, an integral vector $y$ optimum for the dual of
minimizing $wx$ over~(\ref{eqn:blocking}) is necessarily a maximum $w$-packing
of $\mathcal{S}$-paths. Hence, if~\eqref{eqn:blocking} is a TDI system, we
have that the minimum $w$-capacity of an $\mathcal{S}$-vertex-cut is equal to
the maximum $w$-packing of ${\mathcal{S}}$-paths.

\begin{figure}
\begin{center}
\begin{tikzpicture}[x=1cm,y=1cm]
\Vertex{black} (a1) at (90:1) {};
\Vertex{black} (b1) at (210:1) {};
\Vertex{black} (c1) at (330:1) {};
\Term{black} (a2) at (90:2) {};
\Term{black} (b2) at (210:2) {};
\Term{black} (c2) at (330:2) {};
\draw[black] (a2) -- (a1) -- (b1) -- (b2);
\draw[black] (b1) -- (c1) -- (a1);
\draw[black] (c1) -- (c2);
\end{tikzpicture}
\end{center}
\caption{The net.}
\label{fig:net}
\end{figure}

As an example, consider the graph of Figure~\ref{fig:net}, called
\emph{net}. Let $\mathcal{S}$ be the square vertices. A maximum integral
packing of $\mathcal{S}$-paths ($w=1$) contains only one path, while any
$\mathcal{S}$-cut must contain at least two vertices. Precisely, there is a fractional
packing of $\mathcal{S}$-paths of value $\frac{3}{2}$ (by taking each
$\mathcal{S}$-path of length $3$ with value $\frac{1}{2}$), and a fractional
$\mathcal{S}$-cut with the same value (by taking $x(v) = \frac{1}{2}$ for all
$v \notin \mathcal{S}$).

Motivated by the following property, we call \emph{Mader-Mengerian} the
graphs for which the system \eqref{eqn:blocking} is TDI for every set
${\mathcal{S}}$ of terminals.

\begin{property}\label{lemma:perfect}
Given a graph $G$ and a set of terminal $\mathcal{S}$, the following conditions are equivalent:
\begin{enumerate}
\item The system~\eqref{eqn:blocking} is TDI,
\item The polyhedron defined by~\eqref{eqn:blocking} is integral,
\item The optimum value of maximizing $w^Tx$ subject to~\eqref{eqn:blocking} is integral (if finite) for all $w \in V^{ \{0,1,+\infty\} }$.
\end{enumerate}
\end{property}

The proof of this property is postponed to section~\ref{sec:bipartite} where
the stronger Lemma~\ref{lemma:main} is proved. We already know that the long
claw is not Mader-Mengerian.

Our main result (Theorem~\ref{th:bad-graphs}) is a description of the Mader-Mengerian graphs in terms of forbidden minors. However we do not use the usual minor operations (edge deletion and edge contraction), but \emph{ad-hoc} operations on vertices. Our proof implies an algorithm (Lemma~\ref{lemma:main}) to find maximal $w$-packing of paths in Mader-Mengerian graphs and minimum vertex multicuts for a given set of terminals. We also give a characterization of the pairs $(G,\mathcal{S})$ for which the system~\eqref{eqn:blocking} is TDI (Theorem~\ref{th:signed}).

One of our most surprising results is that $G$ is Mader-Mengerian if and only if the system~\eqref{eqn:blocking} is TDI for every independant set $\mathcal{S}$ of cardinality $3$. This implies (with Lemma~\ref{lemma:main}) a polynomial algorithm to recognize Mader-Mengerian graphs.\medskip

Finding a minimum $\mathcal{S}$-cut is an NP-complete problem, even if $|\mathcal{S}|=3$~\cite{papaseym}. In fact,~\cite{papaseym} deals with edge-cuts (that is, sets of edges disconnecting $\mathcal{S}$), but one may observe that $\mathcal{S}$-edge-cut in a graph $G$ correspond to vertex-cut in the line-graph of the graph obtained from $G$ by adding one leaf to each vertex in $\mathcal{S}$.   


Finding maximal packing of disjoint paths is a classical problem in graph theory, even if it was mainly studied for edge-disjoint (or arc-disjoint) paths. Menger~\cite{menger} gave the first significant result, stating that when $|\mathcal{S}|=2$, the maximum number of disjoint $S$-paths is equal to the minimum cardinality of an $(s,t)$-cut, both in edge-disjoint and vertex-disjoint cases. This result was further developped by Ford and Fulkerson~\cite{fordfulkerson}, into what became the network flow theory. When there is more than two terminals, the results are however closer to matching theory than to network flows. Gallai~\cite{gallai} first proved a min-max theorem for packing of fully-disjoint $\mathcal{S}$-paths (that is even the ends of the paths must be disjoint), and his result was then strengthened by Mader~\cite{mader} for inner-disjoint paths with ends in different parts of a partition of the terminals. Mader's theorem implies the following:
\begin{theorem}[Mader, 1978]
Let $G$ be a graph and $\mathcal{S}$ an independant set of $G$. Then,
\[
 \nu(G,\mathcal{S}) = 
 \min |U_0| + \sum_{i=1}^k \left\lfloor \frac{b_{U_0}(U_i)}{2} \right\rfloor 
\]
where the minimum ranges over all the partitions $U_0,\ldots,U_k$ of $V \setminus \mathcal{S}$ , such that each $S$-path intersects either $U_0$ or $E(U_i)$ for some $1 \leq i \leq k$. Here, $b_{U_0}(X) := |\{v \in X~:~N(v) \setminus (X \cup U_o) \neq \emptyset\}|$.  
\end{theorem}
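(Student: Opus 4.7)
The plan is to prove the two inequalities separately. The easy direction, $\nu(G,\mathcal{S})$ bounded above by the right-hand side minimum, I would handle first by a direct counting argument: fix any packing $\mathcal{P}$ and any partition $(U_0, U_1, \ldots, U_k)$ meeting the covering hypothesis, then split $\mathcal{P}$ into paths that intersect $U_0$ and paths that do not. By vertex-disjointness, the first class has cardinality at most $|U_0|$. A path $P$ in the second class must, by hypothesis, use an edge of some $E(U_i)$; since both endpoints of $P$ lie in $\mathcal{S}$, which is disjoint from $U_i$, the path enters and leaves $U_i$ through vertices of $U_i$ having a neighbor outside $U_i \cup U_0$, that is, through vertices counted by $b_{U_0}(U_i)$. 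Hence the total number of visits to $U_i$ by paths of the second class is at most $b_{U_0}(U_i)/2$, and by integrality at most $\lfloor b_{U_0}(U_i)/2 \rfloor$. Assigning each second-class path to one $U_i$ it visits yields the upper bound.

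The hard direction, $\nu(G,\mathcal{S})$ at least the right-hand side, is the substance of Mader's original theorem. My approach would be to specialize his general packing theorem for $\mathcal{A}$-paths to the case where $\mathcal{A} = \{\{s\} : s \in \mathcal{S}\}$ consists of singletons. Because $\mathcal{S}$ is an independent set, an $\mathcal{A}$-path in this setting coincides with an $\mathcal{S}$-path, so $\nu(G,\mathcal{A}) = \nu(G,\mathcal{S})$, and Mader's general min formula collapses to the one stated here, with the parameter set of $\mathcal{A}$ playing the role of $U_0$ and the residual components playing the role of the $U_i$'s. The proof of the general theorem itself I would carry out by induction on $|V(G)| + |E(G)|$: starting from a maximum packing $\mathcal{P}^*$, one studies alternating walks in the residual structure to either augment $\mathcal{P}^*$ (contradicting maximality) or locate a vertex or edge whose deletion or contraction preserves $\nu$, so that the inductive hypothesis applies and the certifying partition can be lifted back to $G$.

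The main obstacle is this inductive step, and specifically the construction of the certifying partition. Unlike Menger's theorem, where a maximum flow directly exhibits a minimum cut, here one must simultaneously single out $U_0$ and the $U_i$'s whose boundary sizes are tight against the floor bound $\lfloor b_{U_0}(U_i)/2 \rfloor$. Parity considerations, strongly reminiscent of Edmonds' blossom algorithm for matchings, are essential: odd alternating structures in the residual graph are exactly what forces the floor, and isolating them cleanly within the inductive argument is the delicate point. I would expect the bulk of the work to lie in this case analysis, tracking how alternating walks interact with the $U_i$'s as local modifications are performed, and verifying that the resulting partition still satisfies the covering condition that each $\mathcal{S}$-path meets $U_0$ or some $E(U_i)$.
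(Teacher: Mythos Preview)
The paper does not prove this statement. Mader's theorem is quoted as a classical result from the literature, with a citation to \cite{mader}, and no argument for it is given; the paper merely uses it as background to motivate the much more restrictive class of graphs under study. There is therefore nothing to compare your proposal against.

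As for the proposal itself: your sketch of the easy inequality is essentially sound, though the phrasing ``total number of visits to $U_i$'' is ambiguous; what you need is that each second-class path assigned to $U_i$ contributes at least two distinct vertices to the boundary set counted by $b_{U_0}(U_i)$, and vertex-disjointness then gives the floor bound. For the hard inequality you correctly identify that this \emph{is} Mader's theorem, and that specializing his $\mathcal{A}$-path result to singleton classes recovers the stated formula. But your outline of the inductive proof is only a description of the shape of the argument, not the argument itself: the actual construction of the certifying partition from a maximum packing, and the alternating-walk machinery that handles the parity obstructions, are substantial and you have not supplied them. If this were being graded as a proof of Mader's theorem, the hard direction would be regarded as missing; if the intent is simply to cite Mader and verify that his formula specializes correctly, then the second paragraph is adequate and the third is commentary rather than proof.
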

In the light of Mader's theorem, we are looking for graphs that admit a much simpler characterization: $\nu(G,\mathcal{S}) = \min |U|$ where the minimum ranges over sets $U$ such that each $S$-path intersects $U$. A practical reason for looking for these graphs is that Mader's theorem relies on matching theory, while our result will only use Menger's theorem, that is flow theory. As a consequence, algorithms for finding an optimal packing of $\mathcal{S}$-paths in Mader-Mengerian graphs are simpler and more efficient than those for general graphs. 

Mader's theorem has been recently extended by Chudnovsky et al.~\cite{chudnovskyetal}, and by Gyula Pap~\cite{pap}.

Let us mention a similar result for edge-disjoint paths, that was proved by Cherkasky~\cite{cherkasky} and Lovász~\cite{lovasz}:
\begin{theorem}[Cherkasky, Lovász, 1977]
For any inner Eulerian graph $G$, then the maximum number of edge-disjoint $\mathcal{S}$-paths is equal to $\frac{1}{2} \sum_{s \in \mathcal{S}} \lambda{s}$, where $\lambda{s}$ is the minimum cardinality of a cut between $s$ and $\mathcal{S} - s$.
\end{theorem}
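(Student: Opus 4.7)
The plan is to prove the two inequalities separately, with the nontrivial direction handled by an edge-splitting-off reduction. For the upper bound, I observe that every $\mathcal{S}$-path has exactly two endpoints in $\mathcal{S}$, and a path with endpoint $s$ uses at least one edge of every edge-cut separating $s$ from $\mathcal{S}\setminus\{s\}$. Hence, in any edge-disjoint packing $\mathcal{P}$, the number of paths incident to $s$ is at most $\lambda_s$, and summing over $s\in\mathcal{S}$ double-counts each path, giving $|\mathcal{P}|\le\tfrac{1}{2}\sum_{s\in\mathcal{S}}\lambda_s$.

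For the lower bound I iteratively simplify $G$ using Lovász's edge-splitting-off theorem at non-terminal vertices. Since $G$ is inner Eulerian, every non-terminal $v$ has even degree, and the theorem supplies a complete admissible pairing of the edges at $v$: it replaces each pair $vu_1,vu_2$ by a single edge $u_1u_2$ while preserving the local edge-connectivity $\lambda(x,y)$ for every pair of vertices $x,y\ne v$. Performing such a splitting at each non-terminal one after the other yields an inner Eulerian multigraph $G'$ whose non-terminals are isolated, so all edges of $G'$ lie between terminals. In $G'$ every non-loop edge is itself an $\mathcal{S}$-path of length one, and by the preservation property $\deg_{G'}(s)=\lambda_{G'}(s,\mathcal{S}\setminus\{s\})=\lambda_s$ (discarding any self-loops, which do not participate in $\mathcal{S}$-paths). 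The non-loop edges of $G'$ therefore form a packing of $\tfrac{1}{2}\sum_s\lambda_s$ edge-disjoint $\mathcal{S}$-paths. Undoing each splitting in reverse order — replacing any introduced edge $u_1u_2$ by the length-two path $u_1vu_2$ through the corresponding split-off vertex $v$ — lifts this packing to the desired $\tfrac{1}{2}\sum_s\lambda_s$ edge-disjoint $\mathcal{S}$-paths in the original $G$.

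The main obstacle is Lovász's splitting-off theorem itself, whose proof rests on a delicate uncrossing argument for tight cuts. A subsidiary subtlety is that the quantity one really needs preserved is the set-connectivity $\lambda_G(s,\mathcal{S}\setminus\{s\})$ rather than the pairwise $\lambda_G(s,s')$, since the former can in principle be strictly larger than the maximum of the latter. To close this gap one invokes the set-version of Lovász's splitting-off, which preserves every $\lambda(x,T)$ for $T$ disjoint from the split-off vertex, or applies the standard version inside each of the $|\mathcal{S}|$ auxiliary graphs $G_s$ obtained from $G$ by contracting $\mathcal{S}\setminus\{s\}$ to a super-sink; since the split-off vertex is non-terminal, a splitting in $G$ induces the corresponding splitting in every $G_s$, and one chooses it to be simultaneously admissible in all of them. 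The inner Eulerian hypothesis enters precisely through the parity condition that makes the splitting theorem applicable at each step.
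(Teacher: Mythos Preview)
The paper does not prove this theorem at all: it is quoted as a known background result, attributed to Cherkassky and Lov\'asz with citations, and no argument is supplied. There is therefore nothing in the paper to compare your attempt against.

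On its own merits, your sketch follows the standard Lov\'asz splitting-off route and is essentially correct. The upper bound is fine. For the lower bound, you rightly flag the one genuine subtlety: Lov\'asz's theorem, as usually stated, preserves the pairwise connectivities $\lambda(x,y)$, whereas what you need preserved is the set quantity $\lambda(s,\mathcal{S}\setminus\{s\})$, which can exceed $\max_{s'}\lambda(s,s')$. Your two proposed fixes both work in principle, but the second one---``choose a splitting simultaneously admissible in every $G_s$''---is asserted rather than justified. The clean way to carry it out is the standard auxiliary-vertex trick: form $H$ from $G$ by adding a new vertex $z$ joined to every terminal $s$ by a large number $M$ of parallel edges. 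Then $\lambda_H(s,z)=M+\lambda_G(s,\mathcal{S}\setminus\{s\})$, the non-terminal $v$ still has even degree in $H$, and any complete admissible splitting at $v$ in $H$ (which exists by Lov\'asz's theorem and only touches edges of $G$, since $v$ is not adjacent to $z$) preserves every $\lambda_H(s,z)$, hence every $\lambda_G(s,\mathcal{S}\setminus\{s\})$. Iterating over all non-terminals then yields your graph $G'$ with $\lambda_{G'}(s,\mathcal{S}\setminus\{s\})=\lambda_s$, i.e.\ the non-loop degree of $s$ in $G'$ equals $\lambda_s$, and the lifting step finishes the proof. With this refinement your argument is complete.
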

This has been later extended by Karzanov and Lomonosov~\cite{karzanosov}, who proved the Locking Theorem. 
These results explains when the maximum packing of edge-disjoint $\mathcal{S}$-paths has a characterization in terms of minimal cuts.




\section{Vertex minors and skew minors}



Given a graph $G=(V,E)$ and $v\in V$, \emph{deleting} $v$ in $G$ means considering the graph $G-v$ induced by $V-v$, that is:
$$G-v:=(V - v, E \setminus \delta_G(v))$$
\emph{Contracting} $v$ means considering the graph $G / v$ obtained by removing $v$ and replacing its neighborhood by a clique:
$$G/v:=(V - v, E \cup \{wx | w, x \in N_G(v)\} \setminus \delta_G(v))$$
For $e=xy \in E$ \emph{contracting} $e$ means considering the graph $G / e$ obtained by identifying
the end-nodes $x$ and $y$ of $e$. 

$$G/e:=(V, E \cup \{xz | z \in N_G(y)\} \cup \{yz | z \in N_G(x)\} \setminus e)$$

A graph obtained from $G$ by any sequence of vertex deletions and vertex contractions is 
a \emph{vertex-minor} of $G$. A graph obtained from $G$ by any sequence of vertex deletions, vertex contractions and edge contractions is a \emph{skew-minor} of $G$.



Vertex-minors can also be described in the following way:

\begin{proposition}
Let $G$ be a graph, and $G'$ be a vertex-minor of $G$. Let $D$ be the vertices deleted and $C$ be the vertices contracted to get $G'$ from $G$. Then, $u, v \in V(G')$ are adjacent in $G'$ if and only if there is a path with extremities $u$ and $v$ in $G$ and whose inner nodes are in $C$.\qed
\end{proposition}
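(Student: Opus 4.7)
The plan is to induct on the number of elementary operations used to produce $G'$ from $G$. The base case of zero operations is immediate: $D = C = \emptyset$, and $u \sim v$ in $G$ exactly when $uv \in E(G)$, which is a path of length one with (vacuously) all inner vertices in $C$.

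For the inductive step, assume the claim holds after $k$ operations with associated sets $D$ and $C$, and consider a $(k{+}1)$-st operation applied to $G'$. If we delete $x \in V(G')$, the new deletion set is $D \cup \{x\}$ and $C$ is unchanged; adjacency among the surviving vertices is inherited from $G'$, so the inductive characterization transfers directly (the paths involved do not use $x$ internally because $x \notin C$). If instead we contract $x \in V(G')$, the new contraction set is $C \cup \{x\}$, and by definition $u \sim v$ in $G'/x$ iff either $u \sim v$ in $G'$ or both $u$ and $v$ are neighbors of $x$ in $G'$.

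For the forward direction of the contraction case, the first alternative supplies, by induction, a $u$--$v$ path in $G$ with inner vertices in $C \subseteq C \cup \{x\}$; in the second alternative, the inductive hypothesis provides a $u$--$x$ path and an $x$--$v$ path in $G$ both with inner vertices in $C$, whose concatenation is a $u$--$v$ walk with all inner vertices in $C \cup \{x\}$, which we shortcut to a genuine path without introducing new inner vertices. For the converse, given a $u$--$v$ path $P$ in $G$ with inner vertices in $C \cup \{x\}$, either $x \notin V(P)$, in which case the inner vertices lie in $C$ and induction gives $u \sim v$ already in $G'$; or $x$ is an inner vertex of $P$, and splitting $P$ at $x$ produces a $u$--$x$ and an $x$--$v$ path each with inner vertices in $C$, so induction yields $u \sim x$ and $x \sim v$ in $G'$, hence $u \sim v$ in $G'/x$.

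The only genuinely delicate point is the walk-to-path shortcut in the contraction step, since concatenating two paths meeting at $x$ need not itself be a path; however, removing closed sub-walks only shrinks the set of inner vertices, so the constraint ``inner vertices in $C \cup \{x\}$'' is preserved. A minor conceptual comment worth including is that the statement depends only on the unordered sets $D$ and $C$, not on the sequence in which the operations were performed; the induction above establishes the characterization for \emph{some} sequence producing $(D,C)$, and since the right-hand side of the equivalence manifestly depends only on $(D,C)$, the proposition follows.
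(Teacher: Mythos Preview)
Your induction on the number of operations is correct and complete; the deletion and contraction cases are handled cleanly, and your remark that shortcutting a walk only shrinks the set of inner vertices is exactly the point needed. The paper itself offers no proof of this proposition (it is stated with an immediate \qed\ and treated as self-evident), so your argument simply supplies the details the authors omit; in particular, your closing observation that the characterization depends only on the unordered pair $(D,C)$ is precisely what the paper then records as Lemma~\ref{lemma:commutativity}.
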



This immediately implies:

\begin{lemma}\label{lemma:commutativity}
Vertex-deletions and vertex-contractions commute.\qed
\end{lemma}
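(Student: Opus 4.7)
The plan is to reduce the lemma to the proposition stated immediately before it. That proposition characterises the edges of any vertex-minor $G'$ purely in terms of the original graph $G$, the total set $D$ of deleted vertices, and the total set $C$ of contracted vertices: two surviving vertices $u, v \in V(G) \setminus (D \cup C)$ are adjacent in $G'$ if and only if $G$ contains a $u$-$v$ path whose internal vertices all lie in $C$. This description is manifestly symmetric in the order in which the individual operations are applied, because the right-hand side mentions only the unordered sets $D$ and $C$. Hence, any two sequences of vertex-deletions and vertex-contractions that yield the same final $D$ and $C$ produce exactly the same graph.

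From here the lemma follows by a standard bubble-sort argument. It is enough to show that swapping two consecutive operations acting on distinct vertices $u \ne v$ leaves the resulting graph unchanged, since any permutation of the operations can be realised by iterating such adjacent swaps. After such a swap, $u$ is still subject to its original operation (deletion or contraction), and similarly for $v$; the overall sets $D$ and $C$ are therefore preserved, and by the proposition the resulting graph is identical. In particular, every vertex-deletion can be pushed past every vertex-contraction, which is the content of the lemma.

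The only point requiring genuine work — and the main obstacle — is the proposition itself, which the paper leaves to the reader. I would justify it by induction on the length of the operation sequence. The base case is trivial. For the inductive step, a single deletion of a vertex not in $C$ manifestly preserves the path-through-$C$ description of adjacency, while a single contraction of a vertex $v$ adds precisely the edges $wx$ for which $w, v, x$ is a length-two path through $v$ in the current graph; combined with the inductive hypothesis applied to pairs of pre-existing adjacencies, this gives exactly the paths in $G$ whose internal vertices lie in the enlarged set $C \cup \{v\}$, completing the induction.
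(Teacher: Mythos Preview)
Your proposal is correct and follows exactly the route the paper intends: the lemma is stated with a \qed\ and preceded by ``This immediately implies,'' so the paper's entire argument is the observation you make---that the proposition characterises adjacency in a vertex-minor solely in terms of the unordered sets $D$ and $C$. Your additional bubble-sort reduction and inductive sketch of the proposition are more than the paper provides, but they are sound and fill in what the paper leaves to the reader.
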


By definition, for a class of graph, being closed under skew minors implies being closed under vertex minors, which in turn implies being closed under induced subgraphs.
Several important classes of graphs are closed under skew minors. Among them:

\begin{definition}$\quad$\vspace{-10pt}
\begin{itemize}
\item[-] The \emph{interval graphs} are the graphs of intersection of
  intervals of the real line.\vspace{-10pt}
\item[-] The \emph{chordal graphs} are the graphs of intersection of subtree
  of a tree. Equivalently, a graph is chordal if each of its cycles of length
  at least $4$ has a chord.\vspace{-10pt}
\item[-] The \emph{cocomparability graphs} are the graphs whose complement is
  the underlying graph of a partially ordered set.\vspace{-10pt}
\item[-] The \emph{Asteroidal-Triple-free (AT-free) graphs} are the graphs without
  asteroidal triple. A stable set $S$ of cardinality $3$ is an
  \emph{asteroidal triple} of $G$ if there is no $x \in S$ such that $S-x$ is
  contained in a connected component of $G - (x \cup N(x))$.\vspace{-10pt}
\item[-] The $P_k$-free graphs, for $k \in \mathbb{N}$, are the graphs with no
  induced path of length at least $k$.
\end{itemize}
\end{definition}

The following proposition is left as an exercise:

\begin{proposition}\label{lemma:closeness}
Interval graphs, chordal graphs, co-comparability graphs, AT-free graphs,
$P_k$-free graphs are closed under skew minors.\qed
\end{proposition}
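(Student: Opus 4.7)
The plan is to verify closure for each class under each of the three skew-minor operations (vertex deletion, vertex contraction $G \mapsto G/v$, edge contraction $G \mapsto G/e$); vertex deletion is immediate in every case since all five characterizations pass to induced subgraphs, so the real work is in the two contraction operations.

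For interval and chordal graphs I would use their intersection-model characterizations. Starting from an interval representation $\{I_w\}_{w \in V}$ of $G$, to realize $G/v$ I set $I'_u := I_u \cup I_v$ for every $u \in N_G(v)$ and keep $I_u$ otherwise; since $u \sim v$ forces $I_u \cap I_v \neq \emptyset$, the union is again an interval, and a short case check shows the new intersection graph is exactly $G/v$ (any two neighbors of $v$ now share at least $I_v$, while a non-neighbor's interval is disjoint from $I_v$). Edge contraction of $e = xy$ is handled by replacing $x, y$ by a single vertex with interval $I_x \cup I_y$. The chordal case is identical after substituting ``subtree of a host tree $T$'' for ``interval of $\mathbb{R}$,'' using that the union of two intersecting subtrees of $T$ is itself a subtree of $T$.

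For cocomparability graphs I would use the characterization by an \emph{umbrella-free} linear order $\sigma$: whenever $u <_\sigma w <_\sigma z$ and $uz \in E$, either $uw \in E$ or $wz \in E$. For vertex contraction I claim the restriction of $\sigma$ to $V - v$ is umbrella-free on $G/v$; the only subtle case is a triple $u < w < z$ in $V - v$ where $uz$ is a \emph{new} edge coming from the clique on $N(v)$, so $u, z \in N_G(v)$ and $w \notin N_G(v)$. Depending on whether $v$ sits before $u$, after $z$, or strictly between them (and on which side of $w$), one applies the umbrella in $G$ either to the triple $(v,w,z)$ using $vz \in E$ or to $(u,w,v)$ using $uv \in E$; the disjunct involving $vw$ is ruled out by $w \notin N_G(v)$, forcing $wz \in E$ or $uw \in E$. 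Edge contraction $xy \mapsto z$ is handled analogously by placing $z$ at the old position of $x$ in $\sigma$ and running the parallel case analysis on which of the three order-triple positions is occupied by $z$.

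For AT-free and $P_k$-free graphs I would argue by contrapositive, lifting a forbidden substructure in the minor back to $G$. For vertex contraction the key observation is that if $P = p_1 \cdots p_m$ is an induced path in $G/v$, then any two $p_i, p_j \in N_G(v)$ become adjacent in $G/v$ through the contraction clique, so they must be consecutive in $P$ on pain of creating a chord; hence at most one edge of $P$ is new, and inserting $v$ in its place produces an induced path in $G$ of no smaller length, handling $P_k$-freeness under vertex contraction. Edge contraction is easier: the unique occurrence of $z = xy$ in an induced path of $G/xy$ is lifted to $x$, $y$, or the two-step path $x$-$y$ by a straightforward case analysis on which of $x, y$ is adjacent in $G$ to each of $z$'s two neighbors in $P$. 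For AT-free graphs one applies the same lifting to each of the three avoidance paths witnessing an asteroidal triple $\{a, b, c\}$ in the minor; when the contracted vertex lies in $\{a, b, c\}$ one replaces it by $x$ (or $y$) and verifies that the new triple is stable and that $x$ (or $y$) lies outside each required closed neighborhood, both consequences of the stability of $\{a, b, c\}$ in the minor. The main obstacle throughout is the cocomparability case, which needs a genuine finite case analysis on the position of the contracted vertex in $\sigma$; the other classes fall out quickly once the correct representation or lifting trick is chosen.
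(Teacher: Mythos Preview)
The paper gives no proof at all: the proposition is explicitly ``left as an exercise'' and closed with a bare \verb|\qed|. So there is nothing to compare against; your proposal \emph{is} a solution to that exercise, and it is essentially correct.

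A few remarks on the details. Your treatments of interval and chordal graphs via the intersection models are clean and complete. For cocomparability under vertex contraction your case split on the position of $v$ relative to $u<w<z$ is exactly right; for edge contraction the analogous case analysis does go through, though it is a little more delicate than you indicate: when $c\in N_G(y)\setminus N_G(x)$ and $y$ lies beyond $c$ in $\sigma$, you need to invoke the umbrella on $x<b<y$ (using $xy\in E$) rather than on a triple involving $c$. For $P_k$-free graphs your key observation---that at most two vertices of an induced path in $G/v$ lie in $N_G(v)$, and then they are consecutive---is the whole point. For AT-free graphs your lifting works, but note that for edge contraction, when the merged vertex $z$ belongs to the triple, the choice between $x$ and $y$ is not free per path: if the first step of a $z$-path lies only in $N_G(y)$ you may need to prepend $x\,y$ rather than just $x$, using $xy\in E(G)$; this keeps the replacement consistent across all three avoidance paths. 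With that small addition the argument is complete.

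One caution unrelated to your reasoning: the paper's displayed formula for $G/e$ keeps both endpoints in $V$ while deleting $e$, which would make $x,y$ non-adjacent twins; taken literally this already fails for $P_4$ (contracting the middle edge yields $C_4$). The surrounding text says ``identifying the end-nodes,'' and your proof follows that intended standard meaning, which is the right call.
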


The following lemma explains why we are interested in the vertex-minor operations.

\begin{lemma}
Given a graph $G$ and a set of terminal $\mathcal{S}$, if the
system~\eqref{eqn:blocking} is TDI, then it is also TDI for any vertex-minor of $G$.
\end{lemma}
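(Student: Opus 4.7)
By Lemma~\ref{lemma:commutativity}, it suffices to prove the preservation of TDI under a single operation at a time: one vertex deletion or one vertex contraction. My strategy in each case is to encode a given integer weight $w$ on $V(G') \setminus \mathcal{S}'$ as an integer weight $w^*$ on $V(G) \setminus \mathcal{S}$ so that an integer optimal dual for $(G, \mathcal{S}, w^*)$ (provided by the TDI hypothesis) furnishes an integer optimal dual for $(G', \mathcal{S}', w)$.

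For the deletion of a non-terminal vertex $v$, I would set $w^*(v) := 0$ and leave the other weights unchanged. The dual capacity constraint at $v$, namely $\sum_{P \ni v} y_P \leq 0$, forces $y_P = 0$ for every $\mathcal{S}$-path through $v$. The surviving dual variables and capacity constraints are exactly those of $(G - v, \mathcal{S}, w)$, so the integer dual optimum in $G$ restricts directly to one in $G - v$.

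For the contraction of a non-terminal vertex $v$, I would set $w^*(v) := M$ for some integer $M > \sum_u w(u)$, which forces $x_v = 0$ in every optimal primal of $(G, \mathcal{S}, w^*)$. Define the map $\phi \colon \mathcal{P}(G, \mathcal{S}) \to \mathcal{P}(G/v, \mathcal{S})$ that leaves $P$ unchanged if $v \notin V(P)$ and otherwise replaces the internal subpath $a - v - b$ by the new clique edge $ab$ of $G/v$. Crucially, each non-terminal $u \neq v$ is internal to $P$ iff it is internal to $\phi(P)$, so the aggregate $y'_{P'} := \sum_{\phi(P) = P'} y^*_P$ of an integer optimal dual $y^*$ in $G$ is integer and dual-feasible in $(G/v, \mathcal{S}, w)$. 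To certify optimality of $y'$, I would verify that the primal optima of the two systems coincide: any feasible fractional cut in $G/v$ extends by $x_v := 0$ to a feasible cut in $G$ of the same weight, and conversely every $\mathcal{S}$-path in $G/v$ using several new clique-edges is still covered by any primal with $x_v = 0$, via a ``witness'' $\mathcal{S}$-path in $G$ obtained by re-inserting $v$ at exactly one of those new edges (and short-cutting the rest through $v$).

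Operations affecting a terminal $v \in \mathcal{S}$ are handled by adaptations of this pattern, with the terminal set adjusted to $\mathcal{S} - v$. Contraction of a terminal is well-posed because independence of $\mathcal{S}$ gives $N(v) \cap \mathcal{S} = \emptyset$, so no new terminal adjacencies are created, and a variant of $\phi$ applies. Deletion of a terminal uses that $(\mathcal{S} - v)$-paths in $G - v$ are exactly the $\mathcal{S}$-paths in $G$ avoiding $v$ as an endpoint; restricting the integer dual from $(G, \mathcal{S}, w)$ to these paths is dual-feasible, and optimality is verified via the primal-integrality characterization in condition (2) of Property~\ref{lemma:perfect}. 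The main obstacle throughout is ensuring that the transferred dual is optimal and not merely feasible; the choice of $w^*$ is engineered precisely to align the primal optima of the two systems, so that LP duality carries optimality across.
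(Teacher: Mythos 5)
Your proposal is correct and is essentially the paper's own argument, which consists precisely of the observation that deleting a non-terminal $v$ corresponds to setting $w_v=0$ and contracting it to setting $w_v=+\infty$; you have simply filled in the LP-duality verification that the paper leaves implicit. The additional discussion of operations on terminal vertices goes beyond what the paper's proof addresses (it only treats $v \in V\setminus\mathcal{S}$) and is not needed for the lemma as used.
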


\begin{proof}
Deleting $v \in V \setminus  {\mathcal{S}}$ corresponds to setting $w_v=0$.
Contracting $v \in V \setminus {\mathcal{S}}$ corresponds to setting $w_v=+\infty$.
\end{proof}

\section{Integrality of the blocker of S-paths}\label{sec:bipartite}



For a given graph $G$ and a set $\mathcal{S}$ of terminal, we construct an
auxiliary graph $G_{\mathcal{S}}$ as follows. First, note that if a
non-terminal vertex $v$ is adjacent to two terminals $s$ and $t$, we may
assume that the maximum packing for a weight function $w$ contains $w(v)$
times the $2$-length paths $sv,vt$, and the minimal $\mathcal{S}$-cut contains
$v$. Hence, we first delete every non-terminal vertex adjacent to two or more
terminals. 

We may also assume that no $\mathcal{S}$-path of a maximum packing contains
two vertices of $N_G(s)$ for some terminal $s$ (by taking chordless
paths). Therefore if $G - N_G(s)$ contains a component disjoint from
$\mathcal{S}$, we can delete all its vertices.

From now, we will always suppose that:
\begin{itemize}
\item[$(i)$] $G$ has no vertices adjacent to two distinct terminals.
\item[$(ii)$] for each $s \in \mathcal{S}$, every component of $G - N_G(s)$ intersects $\mathcal{S}$.
\item[$(iii)$] $G$ has no edge whose ends are both adjacent to the same terminal. 
\end{itemize}

Then we consider the set $N = N_{G}(\mathcal{S})$ of vertices adjacent to
$\mathcal{S}$. $N$ is the vertex set of $G_{\mathcal{S}}$. We delete the
terminals, and contract the vertices in $V - (N \cup \mathcal{S})$. Then we
remove all the edges whose ends are adjacent to the same terminal in $G$ (the
contraction of a path of a maximum packing would not use these edges) . This
gives $G_{\mathcal{S}}$. By construction, this graph is
$|\mathcal{S}|$-partite, each part being the neighborhood of one terminal.

Note that $a, b \in N$ are adjacent in $G_{\mathcal{S}}$ if $a$ and $b$ are not
adjacent to a common terminal, and there is an $(a,b)$-path in $G$ whose inner
vertices are outside $\mathcal{S} \cup N_G(\mathcal{S})$. 

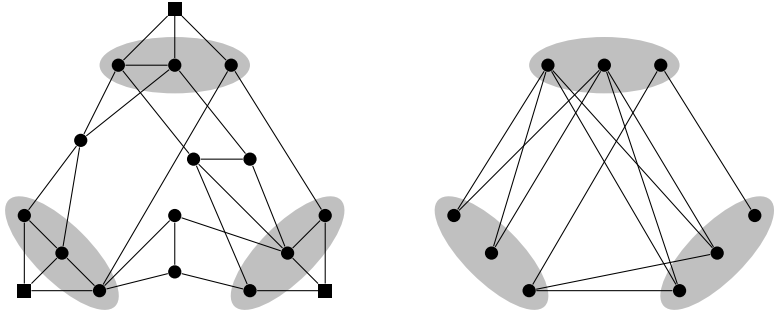
\begin{figure}
\begin{center}
\begin{tabular}{c @{\hspace{1cm}} c}
\begin{tikzpicture}[x=0.25cm,y=0.25cm]
\fill[black,nearly transparent] (28,40) ellipse (4 and 1.5);
\fill[black,nearly transparent,rotate around={135:(22,30)}] (22,30) ellipse (4 and 1.5);
\fill[black,nearly transparent,rotate around={45:(34,30)}] (34,30) ellipse (4 and 1.5);
\Vertex{black} (nx20y32) at (20,32) {};
\Vertex{black} (nx24y28) at (24,28) {};
\Vertex{black} (nx28y32) at (28,32) {};
\Vertex{black} (nx28y29) at (28,29) {};
\Vertex{black} (nx23y36) at (23,36) {};
\Vertex{black} (nx22y30) at (22,30) {};
\Term{black} (nx20y28) at (20,28) {};
\Vertex{black} (nx29y35) at (29,35) {};
\Vertex{black} (nx32y35) at (32,35) {};
\Vertex{black} (nx25y40) at (25,40) {};
\Vertex{black} (nx28y40) at (28,40) {};
\Term{black} (nx36y28) at (36,28) {};
\Vertex{black} (nx36y32) at (36,32) {};
\Vertex{black} (nx34y30) at (34,30) {};
\Vertex{black} (nx32y28) at (32,28) {};
\Term{black} (nx28y43) at (28,43) {};
\Vertex{black} (nx31y40) at (31,40) {};
\draw[black] (nx22y30) -- (nx24y28);\draw[black] (nx34y30) -- (nx36y32);
\draw[black] (nx22y30) -- (nx20y32);\draw[black] (nx29y35) -- (nx32y28);
\draw[black] (nx29y35) -- (nx34y30);\draw[black] (nx32y35) -- (nx34y30);
\draw[black] (nx29y35) -- (nx32y35);\draw[black] (nx25y40) -- (nx29y35);
\draw[black] (nx28y40) -- (nx32y35);\draw[black] (nx31y40) -- (nx36y32);
\draw[black] (nx28y40) -- (nx25y40);\draw[black] (nx28y32) -- (nx34y30);
\draw[black] (nx28y32) -- (nx24y28);\draw[black] (nx28y29) -- (nx32y28);
\draw[black] (nx28y29) -- (nx28y32);\draw[black] (nx24y28) -- (nx28y29);
\draw[black] (nx31y40) -- (nx24y28);\draw[black] (nx28y40) -- (nx23y36);
\draw[black] (nx23y36) -- (nx25y40);\draw[black] (nx23y36) -- (nx22y30);
\draw[black] (nx20y32) -- (nx23y36);\draw[black] (nx31y40) -- (nx28y43);
\draw[black] (nx28y43) -- (nx25y40);\draw[black] (nx28y40) -- (nx28y43);
\draw[black] (nx36y32) -- (nx36y28);\draw[black] (nx36y28) -- (nx34y30);
\draw[black] (nx32y28) -- (nx36y28);\draw[black] (nx20y28) -- (nx24y28);
\draw[black] (nx20y32) -- (nx20y28);\draw[black] (nx20y28) -- (nx22y30);
\end{tikzpicture}
&
\begin{tikzpicture}[x=0.25cm,y=0.25cm]
\fill[black,nearly transparent] (11,41) ellipse (4 and 1.5);
\fill[black,nearly transparent,rotate around={135:(5,31)}] (5,31) ellipse (4 and 1.5);
\fill[black,nearly transparent,rotate around={45:(17,31)}] (17,31) ellipse (4 and 1.5);
\Vertex{black} (nx8y41) at (8,41) {};
\Vertex{black} (nx11y41) at (11,41) {};
\Vertex{black} (nx14y41) at (14,41) {};
\Vertex{black} (nx15y29) at (15,29) {};
\Vertex{black} (nx19y33) at (19,33) {};
\Vertex{black} (nx17y31) at (17,31) {};
\Vertex{black} (nx7y29) at (7,29) {};
\Vertex{black} (nx5y31) at (5,31) {};
\Vertex{black} (nx3y33) at (3,33) {};
\draw[black] (nx7y29) -- (nx15y29);\draw[black] (nx17y31) -- (nx7y29);
\draw[black] (nx7y29) -- (nx14y41);\draw[black] (nx5y31) -- (nx8y41);
\draw[black] (nx11y41) -- (nx5y31);\draw[black] (nx3y33) -- (nx11y41);
\draw[black] (nx8y41) -- (nx3y33);\draw[black] (nx15y29) -- (nx8y41);
\draw[black] (nx11y41) -- (nx15y29);\draw[black] (nx17y31) -- (nx11y41);
\draw[black] (nx8y41) -- (nx17y31);\draw[black] (nx14y41) -- (nx19y33);
\end{tikzpicture}
\end{tabular}
\end{center}
\caption{A graph $G$ and the auxiliary graph $G_{\mathcal{S}}$.}
\label{fig:auxiliary}
\end{figure}

\begin{lemma}\label{lemma:main} 
Given a graph $G$ and a set of terminal $\mathcal{S}$, 
the system~\eqref{eqn:blocking} is TDI
if and only if the auxiliary graph $G_{\mathcal{S}}$ is bipartite.
\end{lemma}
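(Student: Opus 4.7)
The proof splits into two directions, which I would handle separately.

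Direction $(\Rightarrow)$ is the contrapositive: I would take an odd cycle $C = a_1 \ldots a_{2k+1}$ in $G_{\mathcal{S}}$, with each $a_i \in N_G(s_i)$ and each cycle edge $a_i a_{i+1}$ realized in $G$ by a path $\pi_i$ whose interior lies in $R := V \setminus (\mathcal{S} \cup N)$. Because same-$N_G$ edges are removed in $G_{\mathcal{S}}$, consecutive $a_i, a_{i+1}$ lie in distinct $N_G(\cdot)$'s, so $s_i \neq s_{i+1}$. Taking the vertex-minor that deletes everything outside the $s_i$'s, $a_i$'s, and $\pi_i$'s, and contracts the interiors of the $\pi_i$'s (each $a_i a_{i+1}$ thereby becoming a direct edge), the restriction of~\eqref{eqn:blocking} collapses to the fractional vertex-cover LP of the odd cycle $C$, with LP optimum $(2k+1)/2$, while any integer vertex cover needs at least $k+1$ vertices. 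This half-integer gap contradicts TDI, and the preceding vertex-minor-inheritance lemma lifts the contradiction back to $G$.

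Direction $(\Leftarrow)$ assumes $G_{\mathcal{S}}$ is bipartite, say $N = A \sqcup B$. I would reduce~\eqref{eqn:blocking} to an integer max-flow/min-cut problem in an auxiliary vertex-capacitated digraph $H$ constructed from the bipartition. The natural first attempt puts $V(H) = (V(G) \setminus \mathcal{S}) \cup \{\sigma, \tau\}$ with capacity $w(v)$ on each original vertex, arcs $u \to v$ and $v \to u$ for each edge $uv$ of $G[V \setminus \mathcal{S}]$, arcs $\sigma \to a$ for each $a \in A$, and arcs $b \to \tau$ for each $b \in B$. I would then establish a correspondence between chordless $\mathcal{S}$-paths in $G$ and simple $(\sigma,\tau)$-paths in $H$, using bipartiteness to orient each $\mathcal{S}$-path so that it starts at an $A$-vertex and ends at a $B$-vertex; once this correspondence is in place, integrality of minimum $\mathcal{S}$-cuts and maximum $w$-packings follows from Menger's integral max-flow/min-cut for vertex-capacitated digraphs, which is exactly TDI of~\eqref{eqn:blocking}.

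The main obstacle is that this naive construction of $H$ overcounts: a simple $(\sigma,\tau)$-path may have its $A$-end and its $B$-end adjacent to the \emph{same} terminal $s$, thereby encoding a cycle in $G$ rather than an $\mathcal{S}$-path. Ruling these out requires refinement of the construction—for instance, by duplicating each terminal into a source-copy and a sink-copy so that flow between them is forbidden, or by a re-routing argument showing that an optimal flow always admits a decomposition into honest $\mathcal{S}$-path-type flows. Both remedies should rely on bipartiteness of $G_{\mathcal{S}}$ together with the preprocessing conditions $(i)$--$(iii)$, especially $(ii)$: since every component of $G - N_G(s)$ contains a terminal, any spurious $(\sigma,\tau)$-path whose two $N$-ends sit in $N_G(s)$ can be split at a witness terminal $r \neq s$ in the surrounding component and re-routed into genuine $\mathcal{S}$-paths of at least the same total value. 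Once this technical step is carried out, the TDI property follows from standard integral vertex-capacitated max-flow/min-cut duality.
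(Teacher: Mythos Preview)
Your direction $(\Rightarrow)$ and the auxiliary construction in $(\Leftarrow)$ match the paper exactly. The obstacle you flag in $(\Leftarrow)$ --- a $(\sigma,\tau)$-path whose two $N$-ends lie in the same $N_G(s)$ --- is precisely the one the paper handles, and you correctly point to bipartiteness and condition~$(ii)$ as the relevant ingredients. But your proposed remedies are more complicated than needed and, as written, not quite right: duplicating terminals into source/sink copies does nothing to prevent two \emph{non-terminal} $N$-ends from sharing an $N_G(s)$, and your re-routing step speaks of splitting ``at a witness terminal $r \neq s$ in the surrounding component,'' whereas the components of $G \setminus (N \cup \mathcal{S})$ contain no terminals at all; even after reinterpreting $r$ as a nearby $N$-vertex, the capacity bookkeeping under re-routing is left entirely open.

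The paper closes this gap by showing the spurious case simply \emph{cannot occur} for chordless paths --- no re-routing is needed. Restricting to chordless $(\sigma,\tau)$-paths (which suffices for Menger), any such path meets $N$ in exactly two vertices $a \in A$ and $b \in B$, since a third $N$-vertex would be adjacent to $\sigma$ or $\tau$ and create a chord. If both $a,b \in N_G(s)$ for a single $s$, then by~$(iii)$ they are non-adjacent, so the path's interior lies in some component $C$ of $G \setminus (N \cup \mathcal{S})$; condition~$(ii)$ forces $C$ to be adjacent to some $c \in N_G(t)$ with $t \neq s$. Then $a\,c\,b$ is a length-$2$ path in $G_{\mathcal{S}}$, placing $a$ and $b$ on the same side of the bipartition --- contradicting $a \in A$, $b \in B$. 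Hence every chordless $(\sigma,\tau)$-path in $H$ corresponds to a genuine $\mathcal{S}$-path in $G$, and Menger's theorem applies directly.
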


\begin{proof}
Assume that $G_{\mathcal{S}}$ is not bipartite. Let $C^\ast$ be an induced odd cycle of $G_{\mathcal{S}}$.
We define a weight vector $w \in V(G)^{ \{0,1,+\infty\} }$ as follows:
\begin{equation}\label{eqn:auxiliary}
w_v:= \left\{\begin{array}{ll}
1       &  \textrm{if } v \in C^* \\
0       &  \textrm{if } v \in V(G_{\mathcal{S}}) \setminus C^* \\
+\infty &  \textrm{otherwise}
\end{array}\right.
\end{equation}

To every edge $uv$ of $C^\ast$, we can associate an $\mathcal{S}$-path of $G$
intersecting $N$ exactly in $u$ and $v$. Then a maximum fractional $w$-packing
of ${\mathcal{S}}$-paths is given by taking $1/2$ for each of these paths and
a minimum fractional $\mathcal{S}$-cut of $G$ is given by $1/2$ on every node
of $C^\ast$, and $1$ on other vertices of $N$. The optimum value of the corresponding pair
of dual linear programs is then $|V(C^\ast)|/2$, hence the polyhedron
defined by~\eqref{eqn:blocking} is not integer.


Suppose now that $G_{\mathcal{S}}$ is bipartite, with bipartition $(A,B)$. 

Let $H$ be the graph obtained by deleting $\mathcal{S}$ and add two new
non-adjacent vertices $s_a$ and $s_b$, adjacent to respectively $A$ and
$B$.

Let $P$ be a chordless $(s_a,s_b)$-path in $H$. Let $\{a,b\} := N \cap
V(P)$. We can associate a unique path $\hat{P}$ of $G$ to $P$, by replacing
its extremities by terminals of $G$ (because each vertex of $N$ is adjacent to
a unique terminal). We show that $\hat{P}$ cannot be a cycle. Let $Q =
V(\hat{P}) \setminus (\mathcal{S} \cup N)$. If $Q$ is empty, $\hat{P}$ is
clearly not a cycle because in $H$, the neighborhood of a terminal is a stable
set.

Else $Q$ is contained in a component $C$ of $G \setminus (N \cup
\mathcal{S})$. $C$ is adjacent to $N_G(s)$ and $N_G(t)$ for two distinct
terminals $s$ and $t$ by condition $(ii)$. We can suppose that $a \in
N_G(s)$. $\hat{P}$ is a cycle only if $b \in N_G(s)$. But if this was the
case, then for $c \in N_G(t)$ adjacent to $C$, $a,c,b$ would be a path in $H$,
hence $a$ and $b$ would be in the same part of the bipartition $(A,B)$,
contradiction. $\hat{P}$ is not a cycle, it is an $\mathcal{S}$-path.

By applying the vertex-disjoint version of Menger's theorem to $H$,
$\nu(G,w,\mathcal{S}) = \kappa(G,w,\mathcal{S})$ for any $w\in
\mathbb{Z}^{V\setminus{\mathcal{S}}}$.
\end{proof}

\section{A forbidden minor characterization}\label{sec:minor-charac}




In this section, we find a characterization of Mader-Mengerian graphs by excluded vertex-minors. We start from the proof of Lemma~\ref{lemma:main}, where we showed that if a graph is not Mader-Mengerian, its auxiliary graph has an odd cycle. In the auxiliary graph construction, we perform vertex-minor operations plus deletion of edges between two vertices adjacent to the same terminal. It follows that a graph that is not Mader-Mengerian contains a vertex-minor $G$ of the following form.

$G$ is a graph obtained by taking an odd cycle $C$ and the terminals adjacent to $C$. Each vertex of $C$ is adjacent to exactly one terminal, called the \emph{representant} of this vertex. We color the vertices depending on their representants: each representant gets a distinct color, each other vertex has the color of its representant. A color is thus a set of vertices adjacent to some terminal, plus this terminal. Two consecutive vertices of the odd cycle have distinct colors, while the extremities of each chord share the same color. Let $\mathcal{A}_n$ be the class of graphs obtained in this way with $n$ terminals.

One path of lemmas and proofs to obtain the following result is presented
in the Appendix.

\begin{theorem}\label{th:bad-graphs}
Let $G$ be a graph.
The system~\eqref{eqn:blocking} is TDI for every stable set $\mathcal{S}$ if and only if $G$ does not contain a vertex minor in $\mathcal{A}_3$.
\end{theorem}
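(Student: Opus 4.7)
The plan is to handle the two directions separately, with the forward direction nearly immediate and the converse requiring a reduction in the number of terminals. For the forward direction, suppose $G$ has a vertex-minor $H \in \mathcal{A}_3$ with three designated terminals $\mathcal{S}'$. The defining odd cycle of $H$ survives as an odd cycle in the auxiliary graph $H_{\mathcal{S}'}$, so Lemma~\ref{lemma:main} implies \eqref{eqn:blocking} is not TDI for $(H, \mathcal{S}')$. Because TDI of \eqref{eqn:blocking} is preserved by vertex-minors (the unnamed lemma at the end of Section~2), this failure lifts to $(G, \mathcal{S}')$, so $G$ is not Mader-Mengerian.

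For the converse, suppose $G$ is not Mader-Mengerian; pick a stable set $\mathcal{S}$ witnessing this. Lemma~\ref{lemma:main} gives an induced odd cycle $C^*$ in $G_{\mathcal{S}}$, which we may take of minimum length. The construction sketched in the paragraph preceding the theorem statement then produces a vertex-minor $H$ of $G$ lying in $\mathcal{A}_n$, where $n$ is the number of distinct terminals adjacent to $C^*$. Because consecutive vertices of $C^*$ have different representants and $C^*$ is odd, we have $n \geq 3$; if $n=3$ we are done.

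The main work, and the main obstacle, is thus reducing $\mathcal{A}_n$ to $\mathcal{A}_3$ for $n \geq 4$ by vertex-minor operations. My plan is induction on $n$: given $H \in \mathcal{A}_n$ with odd cycle $C$, I would look for a local move that eliminates one color. A natural candidate is to find two same-color vertices on $C$ at cyclic distance at least $2$ and collapse one arc between them via contractions, picking the arc so that parity is preserved (an even number of interior vertices contracted) and deleting any terminal left without any representant. Chords of $H$ (which always link two same-color vertices) either produce such pairs directly or must be cleared by deletions first. The subtlety is that each step must stay within the class $\mathcal{A}_{n'}$ for some $n'<n$: no non-terminal may become adjacent to two terminals, every non-terminal must retain a unique representant, and the cycle must remain odd. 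The hard part is the case analysis showing that such a move always exists when $n \geq 4$---equivalently, that any $H \in \mathcal{A}_n$ admitting no such reduction must already satisfy $n \leq 3$---and I would expect this to be where the bulk of the appendix's lemmas are spent.
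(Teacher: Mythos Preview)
Your forward direction is fine and matches the paper. For the converse you correctly reach a vertex-minor $H\in\mathcal{A}_n$ with $n\ge3$, but your reduction plan diverges from the paper's and, as stated, does not work.

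The specific move you propose---contracting the interior of an arc of $C$ between two same-colour vertices $u,v$---makes $u$ and $v$ consecutive on the resulting cycle while they still share a colour; this violates the defining property of $\mathcal{A}_{n'}$ that consecutive cycle vertices carry distinct colours. More structurally, a one-step reduction $n\to n-1$ cannot be carried out while staying net-free: Lemma~\ref{lemma:odd-colors} forces $n$ to be odd whenever $H$ has no net minor, so any colour-reducing step that remains inside $\bigcup_m\mathcal{A}_m$ and avoids producing a net must drop at least two colours. Your induction on $n$ therefore cannot be maintained in the form you describe, and the ``case analysis showing such a move always exists'' that you anticipate is not just long---it is blocked by this parity obstruction.

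The paper's route is not inductive at all. It pivots on the net, the smallest member of $\mathcal{A}_3$: Lemmas~\ref{lemma:local-bicolor} and~\ref{lemma:bicolor} show that unless $H$ already contains a net as a vertex-minor, the ``graph of colours'' has maximum degree~$2$; Lemma~\ref{lemma:odd-colors} then makes it an odd cycle, so $n$ is odd. Lemma~\ref{lemma:7colors} exhibits a net whenever $n\ge7$, leaving only $n\in\{3,5\}$, and Lemma~\ref{lemma:5colors} handles $n=5$ by directly extracting an $\mathcal{A}_3$ minor. The idea you are missing is the systematic use of the net as the target at every stage, rather than trying to descend through the $\mathcal{A}_n$ hierarchy one colour at a time.
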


\begin{proof}
Direct consequence of Lemmas~\ref{lemma:odd-colors},~\ref{lemma:7colors} and~\ref{lemma:5colors}.
\end{proof}


\begin{corollary}
System~\eqref{eqn:blocking} is TDI for every stable set $\mathcal{S}$ of $G$ if and only if it is TDI for every stable set $\mathcal{S}$ of cardinality $3$ of $G$.\qed
\end{corollary}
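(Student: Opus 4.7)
The plan is to derive this corollary directly from Theorem~\ref{th:bad-graphs}. The forward direction is immediate, so I will focus on the converse, which I will prove by contrapositive: given a stable set $\mathcal{S}$ (of any cardinality) for which~\eqref{eqn:blocking} fails to be TDI, I will produce a stable set of cardinality exactly $3$ for which it also fails.

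The first step is to invoke Theorem~\ref{th:bad-graphs} to obtain a vertex minor $H$ of $G$ that lies in $\mathcal{A}_3$, together with its three distinguished terminals $\mathcal{T} \subseteq V(H)$. I will then argue that $\mathcal{T}$ is already a stable set of size $3$ in $G$: the vertices of $V(H)$ are, by definition, untouched by the minor operations, and since vertex deletion and vertex contraction only add edges (never remove edges between two surviving vertices), non-adjacency in $H$ implies non-adjacency in $G$. As the three terminals of any element of $\mathcal{A}_3$ are pairwise non-adjacent by construction, $\mathcal{T}$ is a $3$-element stable set of $G$.

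Next, I will verify that~\eqref{eqn:blocking} is not TDI for the pair $(H, \mathcal{T})$. This essentially reuses the non-integrality construction in the first half of the proof of Lemma~\ref{lemma:main}: since $H$ is built from an odd cycle $C$ with each cycle vertex adjacent to exactly one terminal, assigning weight $1$ to the vertices of $C$ and $+\infty$ to any remaining non-terminal vertex yields dual optima of common value $|V(C)|/2$, non-integral because $|V(C)|$ is odd.

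The last step will be to lift this failure from $(H, \mathcal{T})$ back up to $(G, \mathcal{T})$ via the earlier lemma that vertex-minor operations applied to non-terminal vertices preserve the TDI property. All operations leading from $G$ to $H$ act on vertices in $V(G) \setminus V(H) \subseteq V(G) \setminus \mathcal{T}$, so the contrapositive of that lemma applies and delivers the desired $3$-element stable set. The main (mild) obstacle is the terminal-bookkeeping in the second step; once the three terminals of $H$ are recognized as genuine non-adjacent vertices of $G$, the remaining assembly is pure cross-referencing of earlier results, which is why the authors present the corollary as an immediate consequence.
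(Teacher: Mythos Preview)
Your proposal is correct and matches the paper's intended argument: the paper gives no proof beyond ``\qed'', treating the corollary as an immediate consequence of Theorem~\ref{th:bad-graphs}, and your write-up is exactly the natural unpacking of that---extract the $\mathcal{A}_3$ vertex-minor, observe its three terminals are still independent in $G$ (vertex-minor operations never delete edges between surviving vertices), and lift the non-TDI witness back to $G$ via the closure lemma at the end of Section~2.
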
 

This gives a polynomial-time recognition algorithm for the related class of graphs, in combination with Lemma~\ref{lemma:main}: we only have to check for each independant subset of three vertices whether the associated auxiliary graph is bipartite. Another important consequence is that the class of graphs for which system~\eqref{eqn:blocking} is TDI for every stable set is large. Indeed, it contains at least the asteroidal-triple-free graphs:

\begin{corollary}
For every asteroidal-triple-free graph, the system~\eqref{eqn:blocking} is TDI.
\end{corollary}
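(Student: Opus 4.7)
The plan is to combine Theorem~\ref{th:bad-graphs} with the closure of the AT-free class under vertex-minors. Proposition~\ref{lemma:closeness} states that AT-free graphs are closed under skew-minors and hence under vertex-minors. So it suffices to show that \emph{every} graph in $\mathcal{A}_3$ admits an asteroidal triple: once this is done, an AT-free $G$ cannot have a vertex-minor in $\mathcal{A}_3$ (such a minor would itself be AT-free), and Theorem~\ref{th:bad-graphs} immediately delivers TDI-ness of the system~\eqref{eqn:blocking} for every stable $\mathcal{S}$.

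The core task is therefore to exhibit an asteroidal triple in an arbitrary $G \in \mathcal{A}_3$. The natural candidate is the set $\{s_1, s_2, s_3\}$ of the three terminals themselves: they form an independent set by construction, so I only need to verify that for each $i$ the other two terminals stay in the same connected component of $G - (\{s_i\} \cup N(s_i))$.

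By symmetry I would fix $s_1$. Its neighborhood $N(s_1)$ consists exactly of the color-$s_1$ vertices on the underlying odd cycle $C$. Deleting them from $C$ produces a disjoint union of arcs; since consecutive vertices of $C$ have distinct colors, two color-$s_1$ vertices are never adjacent in $C$, so each arc is non-empty and the colors along each arc alternate between $s_2$ and $s_3$. Consequently any arc with at least two vertices contains both colors and, via $s_2$ and $s_3$ (each adjacent to its whole color class), yields an $s_2$--$s_3$ path in the remaining graph.

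The heart of the argument is then a parity observation: if every arc were reduced to a single vertex, $C$ would alternate strictly between color $s_1$ and a non-$s_1$ color, forcing $|V(C)|$ to be even and contradicting the oddness of $C$. Hence some arc has length at least two and the required connectivity follows. The only bookkeeping subtlety is to handle possible chords of $C$; but the definition of $\mathcal{A}_3$ forces chord-endpoints to share a color, so chords only add edges between vertices of the same color and never introduce an edge inside $\{s_1, s_2, s_3\}$ or short-circuit the parity step. I expect this to be the one point of genuine checking, while the real content of the proof is the alternation/parity argument sketched above.
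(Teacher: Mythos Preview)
Your approach is exactly the paper's: combine Theorem~\ref{th:bad-graphs} with Proposition~\ref{lemma:closeness} and the fact that every graph in $\mathcal{A}_3$ has an asteroidal triple given by its three terminals. The paper simply asserts this last fact in one clause, whereas you supply a full verification via the alternation/parity argument on the odd cycle; that argument is correct and is precisely the justification the paper leaves implicit.
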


\begin{proof}
Follows from Theorem~\ref{th:bad-graphs} and Proposition~\ref{lemma:closeness}, as every graph in $\mathcal{A}_3$ contains an asteroidal triple, namely the set of terminals.
\end{proof}

To conclude this section on vertex-minors, we prove that there is an infinite number of minimal graphs to exclude.

\begin{lemma}\label{lemma:infinite}
If $n=3$ and each color class induces a clique, then $G$ is a minimal excluded graph.
\end{lemma}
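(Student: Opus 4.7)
The plan is to apply Lemma~\ref{lemma:main} and the closure of the Mader-Mengerian property under vertex-minors. Since $G\in\mathcal{A}_3$, Theorem~\ref{th:bad-graphs} yields that $G$ is not Mader-Mengerian, so it suffices to show that every proper vertex-minor of $G$ is Mader-Mengerian; by closure this reduces to showing, for every $v\in V(G)$, that both $G-v$ and $G/v$ are Mader-Mengerian. Note that $G/s_j=G-s_j$ for each terminal $s_j$, since $N_G(s_j)=V_j\setminus\{s_j\}$ is already a clique. By Lemma~\ref{lemma:main} we then only need to verify that the auxiliary graph of the relevant vertex-minor is bipartite for every stable set of size~$3$.

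The key structural observation is that, because each color class $V_i$ is a clique, every stable set of size~$3$ in $G$, and hence in any vertex-minor of $G$, picks exactly one vertex from each $V_i$; such a stable set is written $\{t_1,t_2,t_3\}$ with $t_i\in V_i$. We first establish that in $G$ itself the auxiliary graph is a non-bipartite odd cycle only for $\{s_1,s_2,s_3\}$: if some $t_i\ne s_i$, so that $t_i$ is a non-terminal cycle vertex of $C$, then each of its two cycle neighbors lies in a color class $V_\ell$ with $\ell\ne i$ and, by the clique hypothesis, is adjacent both to $t_i$ (via a cycle edge) and to the same-color terminal $t_\ell$; preprocessing condition~$(i)$ therefore deletes it. After all these forced deletions, the cycle $C$ is broken into paths; every remaining non-terminal already lies in $\mathcal{S}'\cup N(\mathcal{S}')$, so no vertex is contracted; and after removing same-representative edges only broken-cycle edges survive, giving a forest, which is bipartite.

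The three single operations are handled in turn. Deleting a terminal $s_j$ eliminates the stable set $\{s_1,s_2,s_3\}$, and every remaining stable set must contain a non-terminal $t_j\in V_j\setminus\{s_j\}$, so the argument above gives bipartiteness. Deleting a non-terminal cycle vertex already breaks $C$ into a path, so the auxiliary graph for $\{s_1,s_2,s_3\}$ is a path, while for any other stable set the same cycle-neighbor-deletion argument applies. Contracting a non-terminal cycle vertex $v$ of color $j$ introduces edges from $s_j$ to the two cycle neighbors of $v$; in the preprocessing of the auxiliary graph for $\{s_1,s_2,s_3\}$ these neighbors become adjacent to two distinct terminals (to $s_j$ together with the terminal of their own color) and must be deleted, so the cycle again loses three consecutive vertices and becomes a path. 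For any non-standard stable set of $G/v$, the analogous preprocessing deletions still break every odd cross-color cycle.

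The main obstacle will be this last case: one must verify, for $G/v$ together with an arbitrary non-standard stable set $\mathcal{S}'$, that the combination of the contraction-induced new edges and the forced preprocessing deletions still leaves the auxiliary graph bipartite. This reduces to a short case analysis, enumerating which triples $\{t_1,t_2,t_3\}$ remain stable in $G/v$---namely those in which at most one $t_i$ lies in $N_G(v)$, for otherwise two distinct terminals would become adjacent through the contraction---and verifying that every new cross-color edge created by the contraction has an endpoint forced to be deleted by the preprocessing.
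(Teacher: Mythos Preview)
Your plan is correct and would go through if the final case analysis were written out, but it takes a genuinely different route from the paper. The paper argues as follows: since $G$ is excluded, it contains some \emph{minimal} excluded vertex-minor $G'=G-D/C$; by Theorem~\ref{th:bad-graphs}, $G'\in\mathcal{A}_3$, so $G'$ has three terminals $u,v,w$ and three colour classes $U',V',W'$. One checks that $u,v,w$ must lie in distinct colour classes of $G$, and then that $U',V',W'$ are exactly the restrictions of $G$'s classes to $V(G')$. The key step is that for any contracted non-representant vertex $x\in C$, its only two neighbours outside its own class (its two cycle neighbours) must lie in $D$---otherwise two terminals of $G'$ would be at distance~$\le 2$; since the in-class neighbours of $x$ already form a clique, contracting $x$ has the same effect as deleting it. Hence $G'=G-(C\cup D)$ is an induced subgraph; but then the inter-colour edges of $G'$ are a subset of those of the original odd cycle, and can themselves form a cycle only if no vertex was removed, giving $G'=G$.

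Your route instead verifies directly, via Lemma~\ref{lemma:main} and the corollary to Theorem~\ref{th:bad-graphs}, that the auxiliary graph $(G')_{\mathcal{S}'}$ is bipartite for every one-step vertex-minor $G'\in\{G-v,G/v\}$ and every stable triple $\mathcal{S}'$. The structural observation you isolate---that the clique hypothesis forces every stable triple to be a transversal of the colour classes and forces every non-terminal to lie in $N(\mathcal{S}')$, so the auxiliary construction never contracts and only deletes doubly-covered vertices and intra-class edges---is exactly what makes this work, and your treatment of $G-v$ and of $G/v$ with the standard triple is complete. For $G/v$ with a non-standard triple, your stability condition is slightly misstated (exactly one $t_i$, namely $t_j$, always lies in $N_G(v)$; the point is that $t_\ell\notin\{a,b\}$ for $\ell\neq j$), but the conclusion that $a,b$ are then doubly covered and hence deleted, carrying all contraction-created edges with them, is correct. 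What remains is genuinely just a short check. The paper's argument is more concise because it exploits the $\mathcal{A}_3$-structure of the putative minor rather than re-running the bipartiteness test; your approach is more hands-on but entirely self-contained once Lemma~\ref{lemma:main} and its corollary are available.
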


\begin{proof}
Let $U, V, W \subset V(G)$ be the three colors of $G$, Let $u$, $v$, $w$ be
the three terminals of a minimal excluded minor $G' = G - D / C$. The distance
between two terminals in $G'$ is at least $3$, in particular they cannot be
adjacent. If $x, y \in V(G')$ and $xy \in E(G)$ then $xy \in E(G')$, thus $u$,
$v$ and $w$ have distinct colors in $G$, say $u \in U$, $v \in V$, $w \in W$.

Let $U'$, $V'$ and $W'$ be the color classes of $u$, $v$ and $w$ respectively
in $G'$. Every vertex adjacent to $u$ in $G'$ must be in the same color class
$U'$ as $u$ in $G'$, proving that $U \setminus (D \cup C) \subset U'$. Because
color classes are a partition of the vertex set, we have equality, $U' =U
\setminus (C \cup D)$ and similary for $V'$ and $W'$.

Suppose $C$ is not empty, let $x \in C$. We may assume $x \in U$. If $x$ is
the representant of $U$, then $G' = G - (D + x) / (C - x)$. Else, if $x$ is not
the representant of $U$, $x$ has exactly two neighbors $y$ and $z$ outside
$U$. Because $u$, $v$ and $w$ must be at distance $3$ of each other in $G'$,
$y, z$ must be in $D$. Then we also have that $G' = G - (D + x) / (C -
x)$. Hence $G' = G - (C \cup D)$. But then, as the set of edges between colors
of $G'$ must be a cycle, $G' = G$, proving that $G$ is minimal.
\end{proof}

\section{Minimal skew-minors exclusion}\label{sec:skewminor}

A skew-minor of a graph $G$ is any graph obtained from $G$ via the following operations:
vertex deletion, vertex contraction and edge contraction.

Note that Mader-Mengerian graphs are not closed under edge contraction since
by inflating one of the central vertex of the net we get a Mader-Mengerian
graph. However, we can get a simple sufficient condition for the integrality
of system~\eqref{eqn:blocking} based on skew-minors:


\begin{theorem}\label{coro:rocket}
Any graph $G$ is either Mader-Mengerian or contains a net or a rocket as a skew minor.

\end{theorem}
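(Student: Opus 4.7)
The plan is to leverage Theorem~\ref{th:bad-graphs}: since every vertex-minor is a skew-minor, it suffices to show that every graph in $\mathcal{A}_3$ contains the net or the rocket as a skew-minor. I would fix $H \in \mathcal{A}_3$ built from an odd cycle $C$ of length $2k+1$ together with its three terminals $s_1,s_2,s_3$ and the induced colour classes, and induct on $k$ (with a secondary induction on the number of chords of $C$).

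For the base case $k=1$, the cycle $C$ is a triangle. Since consecutive cycle-vertices must carry distinct colours and only three colours are available, the three vertices of $C$ get the three distinct colours. After deleting any surplus terminal-neighbours outside $C$, the resulting graph is exactly the net.

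For the inductive step $k \geq 2$, I would first try to exploit chords in order to shorten $C$. A chord $uv$ of $C$ has both endpoints in the same colour class and splits $C$ into two internally disjoint paths, exactly one of even length $2\ell$. Successively contracting the $2\ell$ edges of that even path identifies $u$ with $v$ and leaves an odd cycle of length $2(k-\ell)+1$; cleaning up by deleting isolated terminal-neighbours produces an element of $\mathcal{A}_3$ with a strictly shorter cycle, and the inductive hypothesis finishes the case. When $C$ is chordless, I would instead attempt an edge contraction directly on $C$: pick consecutive cycle-vertices $x,y$ with colours $i\neq j$, contract the edge $xy$, and then absorb one of the terminals $s_i$ or $s_j$ into the merged vertex via a further edge contraction followed by appropriate vertex-deletions. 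The goal is to re-fit the result into $\mathcal{A}_3$ on an odd cycle of length $2k-1$ and invoke induction.

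The main obstacle is the chordless case: there is no guarantee that every colouring of a chordless odd cycle admits such a reducing contraction, because the contraction can create a forbidden adjacency between two terminals or between two vertices of the same colour in the image. I expect the rocket to be precisely the minimal chordless configuration on a length-$5$ cycle where every such attempt destroys the odd-cycle structure. The heart of the proof will therefore be a finite but delicate case analysis of the colourings of $C_5$ (together with their admissible chord augmentations), showing that each non-rocket configuration does admit a reducing skew-minor operation, while in the remaining configurations the rocket appears directly as a skew-minor. Once this finite analysis is settled, the induction cascades down to either the net or the rocket, completing the dichotomy.
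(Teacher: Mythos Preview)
The paper states Theorem~\ref{coro:rocket} without proof --- Section~\ref{sec:skewminor} contains only the statement and Figure~\ref{fig:rocket} --- so there is no argument in the paper to compare your proposal against.

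On its own merits, your overall strategy (reduce via Theorem~\ref{th:bad-graphs} to showing every graph in $\mathcal{A}_3$ has a net or rocket skew-minor, then induct on the cycle length) is the natural one, and the base case $k=1$ is correct. However, both branches of your inductive step have real gaps.

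In the chord case, contracting the $2\ell$ edges of the even $u$--$v$ subpath does produce an odd cycle of length $2(k-\ell)+1$, but the merged vertex is now adjacent to \emph{every} terminal whose colour appeared on the contracted path, not just to $\repr(u)=\repr(v)$. Your ``cleaning up by deleting isolated terminal-neighbours'' cannot repair this: the extra terminal adjacencies are to the three global terminals $s_1,s_2,s_3$, which you must keep, and edge deletion is not a skew-minor operation. So the result is not an element of $\mathcal{A}_3$, and the induction does not go through as written. (A workable variant is to contract a single chord whose endpoints are at cycle-distance~$2$, then delete the vertex trapped between them; but one must then argue such a chord exists, or handle separately the case where it does not.)

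In the chordless case you only discuss $C_5$, yet the induction must reduce chordless $C_{2k+1}$ for every $k\ge 3$ to a shorter $\mathcal{A}_3$ instance. Your proposed move --- contract a cycle edge $xy$ and then ``absorb'' a terminal --- yields an even cycle after the first contraction, and it is not clear how the second contraction restores odd length while keeping exactly one terminal attached to the merged vertex. Concretely, for $C_9$ with the periodic colouring $1,2,3,1,2,3,1,2,3$ there is no pair of same-colour vertices at distance~$2$, and a single edge contraction followed by a terminal absorption does not land back in $\mathcal{A}_3$; a more careful reduction is needed here before the ``finite case analysis at $C_5$'' can be invoked.
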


\begin{figure}
\begin{center}
\begin{tikzpicture}[x=1cm,y=1cm]
\node[circle,inner sep = 0pt,minimum size =3pt,fill = black] (a) at (90:1.5) {};
\node[circle,inner sep = 0pt,minimum size =3pt,fill = black] (b) at (162:0.7) {};
\node[circle,inner sep = 0pt,minimum size =3pt,fill = black] (c) at (234:1) {};
\node[circle,inner sep = 0pt,minimum size =3pt,fill = black] (d) at (306:1) {};
\node[circle,inner sep = 0pt,minimum size =3pt,fill = black] (e) at (18:0.7) {};
\node[circle,inner sep = 0pt,minimum size =3pt,fill = black] (f) at (230:2) {};
\node[circle,inner sep = 0pt,minimum size =3pt,fill = black] (g) at (310:2) {};
\node[circle,inner sep = 0pt,minimum size =3pt,fill = black] (h) at (90:2.5) {};
\draw[black] (a) to[relative,out=340,in=200] (b); 
\draw[black] (b) -- (d);
\draw[black] (d) -- (c);
\draw[black] (c) -- (e);
\draw[black] (e) to[relative,out=340,in=200] (a);
\draw[black] (a) -- (h);
\draw[black] (b) to[relative,out=340,in=200] (f);
\draw[black] (f) to[relative,out=10,in=170] (c);
\draw[black] (d) to[relative,out=10,in=170] (g);
\draw[black] (g) to[relative,out=340,in=200] (e);
\end{tikzpicture}
\end{center}
\caption{The rocket}
\label{fig:rocket}
\end{figure}
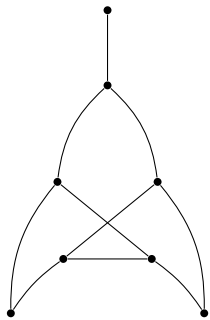

\section{When the set of terminals is fixed}

Our arguments apply when we want to find the pairs $(G,\mathcal{S})$,
$\mathcal{S} \subset V(G)$, for which the system~\eqref{eqn:blocking} is
TDI. Up to now, we have only looked at graphs $G$ for which we have TDIness
for every set of terminals.  To deal with a fixed set of terminals, we define
another notion of vertex-minor, the \emph{signed vertex-minor}, defined on
pairs $(G,\mathcal{S})$. Signed vertex-minor are defined like vertex-minor,
except that the set of terminals of the minor must be a subset of the
terminals of the original graph. More precisely, $(H,\mathcal{S}')$ is a
signed vertex-minor of $(G,\mathcal{S})$ if $H$ is a vertex-minor of $G$ and
$\mathcal{S}' \subseteq \mathcal{S}$.

Recall that $\mathcal{A}_3$ is the class of graphs built from a three-colored
odd cycle, by adding a terminal for each color, and chords with extremities of
the same color. We define similarly the class $\signed$ of signed vertex-minor
$(G,\mathcal{S})$, where $G \in \mathcal{A}_3$, and $\mathcal{S}$ is the set
of the three terminals in the construction of $G$.

This setting does not affect Lemma~\ref{lemma:main}, and then the following
theorem, close to Theorem~\ref{th:bad-graphs}, can be deduced by the same
proof. Indeed, the proofs in Section~\ref{sec:minor-charac} never create new
terminals when considering vertex-minors, and hence are still valid for signed
vertex-minors.

\begin{theorem}\label{th:signed}
Let $G$ be a graph and $\mathcal{S}$ a set of terminal in $G$. The
system~\eqref{eqn:blocking} is TDI if and only if $(G,\mathcal{S})$ does not have a
signed vertex-minor in $\signed$.\qed
\end{theorem}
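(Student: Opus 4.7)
My plan is to follow the same strategy as in Theorem~\ref{th:bad-graphs}, carefully tracking terminals at every step so as to guarantee $\mathcal{S}' \subseteq \mathcal{S}$ whenever an $\mathcal{A}_3$-minor is produced. Note that Lemma~\ref{lemma:main} is stated for an arbitrary pair $(G,\mathcal{S})$ and therefore applies verbatim in the fixed-terminal setting, so the task reduces to showing that $G_{\mathcal{S}}$ contains an odd cycle if and only if $(G,\mathcal{S})$ has a signed vertex-minor in $\signed$.

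For the direction \emph{not TDI implies signed vertex-minor in $\signed$}, I would start from an odd cycle in $G_{\mathcal{S}}$ (provided by Lemma~\ref{lemma:main}) and replay the proofs of Lemmas~\ref{lemma:odd-colors}, \ref{lemma:7colors} and~\ref{lemma:5colors} that drive Theorem~\ref{th:bad-graphs}. The central verification, which the paragraph preceding the theorem explicitly highlights, is that each of these reductions picks its three terminals from among those of $\mathcal{S}$ and never creates a new terminal: the reductions only delete and contract non-terminal vertices of $G$, while the three representants of the final $\mathcal{A}_3$-graph are simply those terminals of $\mathcal{S}$ to which the chosen odd cycle is anchored. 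Consequently, the resulting vertex-minor $H \in \mathcal{A}_3$, together with its canonical triple $\mathcal{S}' \subseteq \mathcal{S}$, is exactly the desired signed vertex-minor in $\signed$.

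For the converse, given a signed vertex-minor $(H,\mathcal{S}') \in \signed$ of $(G,\mathcal{S})$, the auxiliary graph $H_{\mathcal{S}'}$ contains by construction the defining odd cycle of the $\mathcal{A}_3$ graph $H$, so by Lemma~\ref{lemma:main}, $(H,\mathcal{S}')$ is not TDI. To transfer non-TDIness back to $(G,\mathcal{S})$, I would establish a signed analog of the preservation lemma already used in Section~\ref{sec:bipartite}: each elementary signed-vertex-minor operation translates in the blocking LP of $(G,\mathcal{S})$ to a choice of $\{0,1,+\infty\}$-weights (non-terminal deletion as $w_v = 0$, non-terminal contraction as $w_v = +\infty$, with the restriction $\mathcal{S}' \subseteq \mathcal{S}$ handled by the same weight bookkeeping that appears in the construction of $G_{\mathcal{S}}$ from $(G,\mathcal{S}')$). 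The half-integral witness of non-TDIness for $(H,\mathcal{S}')$, namely the uniform $\tfrac12$-assignment on the odd cycle of $H_{\mathcal{S}'}$ exhibited in the proof of Lemma~\ref{lemma:main}, then pulls back to a half-integral witness for $(G,\mathcal{S})$ under these weights.

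The main obstacle, as in Theorem~\ref{th:bad-graphs}, lies in the inductive bookkeeping inside Lemmas~\ref{lemma:odd-colors}, \ref{lemma:7colors} and~\ref{lemma:5colors}; the only additional care required here is to verify, step by step, that the terminals surviving each reduction belong to the original $\mathcal{S}$, which is immediate because these reductions are phrased purely in terms of operations on non-terminal vertices.
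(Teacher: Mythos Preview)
Your proposal is correct and follows essentially the same approach as the paper: the paper's proof is nothing more than the observation (stated in the paragraph preceding the theorem) that Lemma~\ref{lemma:main} already applies to a fixed pair $(G,\mathcal{S})$ and that the reductions in Lemmas~\ref{lemma:odd-colors}, \ref{lemma:7colors} and~\ref{lemma:5colors} never introduce terminals outside $\mathcal{S}$, which is exactly what you verify. Your treatment of the converse via a signed analogue of the preservation lemma (that lemma is actually Lemma~7 in Section~2, not Section~\ref{sec:bipartite}) is more explicit than the paper's, but it is the same argument the paper leaves implicit.
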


\begin{corollary}
The system~\eqref{eqn:blocking} is TDI for $(G,\mathcal{S})$ if and only if it is TDI for every $(G,\mathcal{S}')$, with $\mathcal{S'} \subseteq \mathcal{S}$, $|\mathcal{S}| = 3$.\qed
\end{corollary}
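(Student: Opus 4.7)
The plan is to derive this corollary as a direct consequence of Theorem~\ref{th:signed}, which says that TDIness of~\eqref{eqn:blocking} for $(G,\mathcal{S})$ is equivalent to the absence of a signed vertex-minor in $\signed$. The key observation is that signed vertex-minors behave monotonically with respect to the terminal set: enlarging or shrinking $\mathcal{S}$ does not change what can be obtained as a vertex-minor of $G$, it only enlarges or shrinks the collection of admissible $\mathcal{T}$'s.

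For the forward direction, I would fix $\mathcal{S}' \subseteq \mathcal{S}$ with $|\mathcal{S}'|=3$ and argue by contrapositive. Suppose $(G,\mathcal{S}')$ admits a signed vertex-minor $(H,\mathcal{T}) \in \signed$. By definition $H$ is a vertex-minor of $G$ and $\mathcal{T} \subseteq \mathcal{S}'$. Since $\mathcal{S}' \subseteq \mathcal{S}$, we have $\mathcal{T} \subseteq \mathcal{S}$, so the same $(H,\mathcal{T})$ is a signed vertex-minor of $(G,\mathcal{S})$. By Theorem~\ref{th:signed} this contradicts TDIness of $(G,\mathcal{S})$.

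For the backward direction, I again argue by contrapositive. Assume $(G,\mathcal{S})$ is not TDI. Theorem~\ref{th:signed} yields a signed vertex-minor $(H,\mathcal{T}) \in \signed$ of $(G,\mathcal{S})$. By construction of the class $\signed$ every element has exactly three terminals, so $|\mathcal{T}|=3$ and $\mathcal{T} \subseteq \mathcal{S}$. Setting $\mathcal{S}' := \mathcal{T}$, one checks that $\mathcal{S}'$ is a valid set of terminals of $G$ (it is independent as a subset of the independent set $\mathcal{S}$) and that $(H,\mathcal{T})$ is a signed vertex-minor of $(G,\mathcal{S}')$. Applying Theorem~\ref{th:signed} once more shows that $(G,\mathcal{S}')$ is not TDI, completing the argument.

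I expect no genuine obstacle. The whole content of the corollary is bundled into Theorem~\ref{th:signed} together with the trivial fact that $\signed$ is defined so that its terminal sets have cardinality exactly three; everything else is bookkeeping about inclusions of terminal sets. The only mild subtlety is to remember that a signed vertex-minor of $(G,\mathcal{S}')$ is automatically a signed vertex-minor of any $(G,\mathcal{S})$ with $\mathcal{S} \supseteq \mathcal{S}'$, which is immediate from the definition.
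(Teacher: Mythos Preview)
Your argument is correct and matches the paper's intended reasoning: the paper gives no proof beyond the \qed, treating the corollary as an immediate consequence of Theorem~\ref{th:signed} together with the fact that every member of $\signed$ has exactly three terminals. Your contrapositive bookkeeping is exactly what is implicit in that \qed.
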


Moreover, all the graphs of $\signed$ are minimal graphs by signed vertex-minors for which system~\eqref{eqn:blocking} is not TDI. Indeed, a potential minor would have the same set of terminals. Moreover, if we contract a vertex, then its two consecutive vertices in the odd cycle become adjacent to two terminals, hence must be deleted. Hence, the minor must be obtained without vertex contraction, and the minimality follows easily. 

\section{Conclusion}

We studied the pairs $(G,\mathcal{S})$ of (graphs, subsets of terminals) for
which the cost of an $\mathcal{S}$-vertex-cut is equal to the maximum packing
of $\mathcal{S}$-paths. We proved that this property for a given $\mathcal{S}$
is polynomially checkable as it reduces to the bipartiteness of an auxiliary
graph. Moreover if this property is true, the minimal $\mathcal{S}$-cut and
maximum path-packing problems can be solved by finding a maximum vertex-capacitated
flow in a smaller graph.

We proved that if $(G,\mathcal{S})$ does not satisfy this property, then there
exists $\mathcal{S}' \subseteq \mathcal{S}$ with $|\mathcal{S}'|=3$ such that
$(G,\mathcal{S}')$ does not satisfy it either. Moreover, each signed
vertex-minor in $\signed$ is a minimal signed vertex-minor obstruction.

Concerning the graphs satisfying the min-max formula for any $\mathcal{S}$, we
proved that they can be recognized in polynomial time, that the list of
vertex-minor obstructions is infinite, but we were unable to provide an
explicit description of this list. We believe that this list is hard to
obtain, and somehow ugly. We also proved that this class of graphs is
interesting as it contains the asteroidal-triple-free graphs.



\newpage

\begin{appendix}
\section*{Appendix to section~\ref{sec:minor-charac}}

The \emph{distance in $C$} between two vertices $u$ and $v$ is the minimum number of arcs in one of the two $(u,v)$-paths in $C$. We denote $d_C(u,v)$ this minimum. We say that $u$ and $v$ are \emph{consecutive} if $d_C(u,v) = 1$. We denote $\repr(u)$ the representant of a vertex $u$. We say that a vertex of $C$ is \emph{bicolored} if its two neighbors in $C$ have distinct colors. Two colors are \emph{adjacent} if there is an edge in $C$ whose ends have these two colors.

Note that the net is a forbidden minor of Mader-Mengerian graphs, and is minimal. We try to find other forbidden minors that do not have a net as vertex-minor. For a graph $H$, we say that $G$ is $H$-free if $H$ is not a vertex minor of $G$.

\begin{lemma}\label{lemma:local-bicolor}
Let $u$ be a bicolored vertex. Let $v$ be a vertex of the same color as $u$. Then, either $G$ contains a net, or every vertex consecutive to $v$ has the color of a vertex consecutive to $u$.
\end{lemma}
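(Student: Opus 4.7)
The plan is to prove the contrapositive: assuming some vertex $x$ consecutive to $v$ has color $X := \repr(x) \notin \{A, B\}$, where $A$ and $B$ are the distinct colors of the two $C$-neighbors $a, b$ of $u$, I will exhibit a net as a vertex-minor of $G$. Writing the cycle as $v_0 v_1 \cdots v_{2\ell}$ with $v_0 = u$, $v_1 = a$, $v_{2\ell} = b$, $v_j = v$, and, WLOG, $x = v_{j-1}$, I first record that $U := \repr(u) = \repr(v)$ lies outside $\{A, B, X\}$: indeed $U \neq A, B$ by bicoloredness of $u$, and $U \neq X$ because $v$ is $C$-adjacent to $x$.

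The target net will have triangle vertices $\{a, b, x\}$ and leaf terminals $\{s_A, s_B, s_X\}$. The key idea is to exploit $s_U$ as a contracted \emph{hub}: since $s_U$ is adjacent in $G$ to every $U$-colored cycle-vertex, contracting it yields a direct short-cut between any two $U$-colored vertices, in particular between $u$ and $v$. Concretely I will partition $V(G)$ as follows:
\begin{itemize}
\item kept: $\{a, b, x, s_A, s_B, s_X\}$, the six vertices of the target minor;
\item deleted set $D := \{w \in V(C) : \repr(w) \in \{A, B, X\}\} \setminus \{a, b, x\}$;
\item contracted set $C' := V(G) \setminus (\text{kept} \cup D)$, which contains $u$, $v$, $s_U$, and every other cycle-vertex or terminal whose color is outside $\{A, B, X\}$.
\end{itemize}

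The verification that $G - D / C'$ is isomorphic to the net will use the proposition characterizing vertex-minor adjacency by paths in $G$ with inner vertices in $C'$. The triangle edge $ab$ is witnessed by $a$-$u$-$b$ (inner $u \in C'$); the edges $ax$ and $bx$ are witnessed by $a$-$u$-$s_U$-$v$-$x$ and $b$-$u$-$s_U$-$v$-$x$, whose inner vertices $u, s_U, v$ all lie in $C'$; the three leaf edges are already present in $G$. To rule out spurious adjacencies, for a triangle-to-terminal non-edge such as $a$-$s_B$ the penultimate vertex of any candidate path must be a $B$-colored neighbor of $s_B$, which is either $b$ (kept, disallowed as an inner vertex) or a $B$-colored vertex in $D$ (disallowed entirely); for a terminal-to-terminal non-edge such as $s_A$-$s_B$, the first step out of $s_A$ must lie in $C' \cap N_G(s_A)$, which is empty because $N_G(s_A) = \{a\} \cup \{\text{other } A\text{-colored cycle vertices}\} \subseteq \text{kept} \cup D$.

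The principal obstacle I anticipate is the recognition that the "unused" terminal $s_U$ must be placed in $C'$ (contracted) rather than in $D$ (deleted). With $s_U \in C'$ it acts as a universal relay between $u$ and $v$, which delivers the triangle edges $ax$ and $bx$ without any assumption on the colors of the interior cycle-vertices along the arcs joining $a$ to $x$ through $v_2, \ldots, v_{j-2}$ or $b$ to $x$ through $v_{2\ell-1}, \ldots, v_{j+1}, v$. Without this hub, contracting directly along the cycle would fail as soon as an $A$-, $B$-, or $X$-colored vertex appeared in the interior of one of these arcs, and the natural alternative of choosing three "well-placed" representatives whose three inter-arcs are clean of $\{A, B, X\}$-colored vertices need not succeed in general.
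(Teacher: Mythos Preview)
Your argument is correct, and it is in fact cleaner than the paper's own proof. Both proofs hinge on the same key observation --- contracting the terminal $s_U=\repr(u)$ creates a hub linking $u$ and $v$, which makes $a=u_1$, $b=u_2$ and the ``bad'' neighbour $x=v'$ pairwise adjacent in the minor --- but the paper then splits into three cases according to $d_C(u,v)$ and $d_C(v',u)$, and in each case tailors a different net (with different choices of triangle and of which vertices play the role of leaves; in the case $d_C(u,v)=2$ the leaves are even taken among cycle vertices rather than terminals). Your uniform rule --- delete every cycle vertex of colour $A$, $B$ or $X$ other than $a$, $b$, $x$, and contract everything else --- renders the case split unnecessary: the two sentences checking non-edges (the penultimate-vertex argument for triangle--to--wrong-terminal, and the empty-first-step argument for terminal--to--terminal) go through regardless of the relative positions of $u$, $v$ and $x$ on $C$. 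The price you pay is contracting all spare terminals rather than just $s_U$, but as you implicitly note, a terminal $s_Y$ with $Y\notin\{A,B,X\}$ has all its $G$-neighbours inside $C'$, so this never creates a spurious edge between kept vertices. One cosmetic point: your labelling ``WLOG $x=v_{j-1}$'' is harmless since nothing afterward uses the index $j-1$; all you need is that $x$ is $C$-adjacent to $v$ with colour outside $\{A,B\}$, which also forces $v\neq u$ automatically.
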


\begin{figure}
\begin{center}
\begin{tabular}{cc}
\begin{tikzpicture}[x=0.7cm,y=0.7cm]
\draw[transparent, use as bounding box] (-3,-3) rectangle (3,3);
\draw (75:1.5) arc (75:140:1.5);
\draw[dashed] (140:1.5) arc (140:220:1.5);
\draw (220:1.5) arc (220:320:1.5);
\draw[dashed] (320:1.5) arc (320:360:1.5);
\draw[dashed] (0:1.5) arc (0:75:1.5);
\Vertex{blue} (v) at (90:1.5) {}; \draw (v) node[anchor=south] {$v$};
\Vertex{blue} (u) at (270:1.5) {}; \draw (u) node[anchor=north] {$u$};
\Term{blue} (ru) at (180:0.4) {}; \draw (ru) node[anchor=west] {$\repr(u)$};
\Vertex{red} (u1) at (235:1.5) {}; \draw (u1) node[anchor=south] {$u_1$};
\Term{red} (w1) at (235:2.5) {}; \draw (w1) node[anchor=east] {$\repr(w_1)$};
\Vertex{dgreen} (u2) at (305:1.5) {}; \draw (u2) node[anchor=south] {$u_2$};
\Term{dgreen} (w2) at (305:2.5) {}; \draw (w2) node[anchor=west] {$\repr(w_2)$};
\Vertex{black} (v2) at (125:1.5) {}; \draw (v2) node[anchor=north] {$v'$};
\Term{black} (rv2) at (125:2.5) {}; \draw (rv2) node[anchor=south] {$\repr(v')$};
\draw (u1) -- (w1) (u2) -- (w2) (u) -- (ru) -- (v) (v2) -- (rv2);
\end{tikzpicture}
&
\begin{tikzpicture}[x=0.7cm,y=0.7cm]
\draw[transparent, use as bounding box] (-3,-3) rectangle (3,3);
\draw[very nearly transparent] (0,0) circle (1.5);
\Vertex{blue} (v) at (90:1.5) {}; \draw (v) node[anchor=south] {$v$};
\Vertex{red} (u1) at (235:1.5) {}; \draw (u1) node[anchor=south west] {$u_1$};
\Term{red} (w1) at (235:2.5) {}; \draw (w1) node[anchor=east] {$\repr(w_1)$};
\Vertex{dgreen} (u2) at (305:1.5) {}; \draw (u2) node[anchor=south east] {$u_2$};
\Term{dgreen} (w2) at (305:2.5) {}; \draw (w2) node[anchor=west] {$\repr(w_2)$};
\Term{black} (rv2) at (125:2.5) {}; \draw (rv2) node[anchor=south] {$\repr(v')$};
\draw (w1) -- (u1) -- (u2) -- (w2) (u1) -- (v) -- (u2) (v) -- (rv2);
\end{tikzpicture}
\\
$(a)$ & $(b)$ \\
\begin{tikzpicture}[x=0.8cm,y=0.8cm]
\draw[transparent, use as bounding box] (-3,-3) rectangle (3,3);
\draw (100:1.5) arc (100:300:1.5);
\draw[dashed] (300:1.5) arc (300:360:1.5);
\draw[dashed] (0:1.5) arc (0:100:1.5);
\Vertex{blue} (v) at (120:1.5) {}; \draw (v) node[anchor=south east] {$v$};
\Vertex{blue} (u) at (240:1.5) {}; \draw (u) node[anchor=north east] {$u$};
\Term{blue} (ru) at (180:0.4) {}; \draw (ru) node[anchor=west] {$\repr(u)$};
\Vertex{red} (u1) at (200:1.5) {}; \draw (u1) node[anchor=west] {$u_1$};
\Term{red} (w1) at (200:2.5) {}; \draw (w1) node[anchor=east] {$\repr(w_1)$};
\Vertex{dgreen} (u2) at (280:1.5) {}; \draw (u2) node[anchor=south] {$u_2$};
\Term{dgreen} (w2) at (280:2.5) {}; \draw (w2) node[anchor=north] {$\repr(w_2)$};
\Vertex{black} (v2) at (160:1.5) {}; \draw (v2) node[anchor=north east] {$v'$};
\Term{black} (rv2) at (160:2.5) {};\draw (rv2) node[anchor=south east] {$\repr(v')$};
\draw (u1) -- (w1) (u2) -- (w2) (u) -- (ru) -- (v) (v2) -- (rv2);
\end{tikzpicture}
&
\begin{minipage}{0.2\textwidth}
\end{minipage}
\begin{tikzpicture}[x=0.8cm,y=0.8cm]
\draw[transparent, use as bounding box] (-3,-3) rectangle (3,3);
\draw[very nearly transparent] (0,0) circle (1.5);
\draw (160:1.5) arc (160:200:1.5);
\Vertex{red} (u1) at (200:1.5) {}; \draw (u1) node[anchor=west] {$u_1$};
\Term{red} (w1) at (200:2.5) {}; \draw (w1) node[anchor=east] {$\repr(w_1)$};
\Vertex{dgreen} (u2) at (280:1.5) {}; \draw (u2) node[anchor=south] {$u_2$};
\Term{dgreen} (w2) at (280:2.5) {}; \draw (w2) node[anchor=north] {$\repr(w_2)$};
\Vertex{black} (v2) at (160:1.5) {}; \draw (v2) node[anchor=north east] {$v'$};
\Term{black} (rv2) at (160:2.5) {};\draw (rv2) node[anchor=south east] {$\repr(v')$};
\draw (w1) -- (u1) -- (u2) -- (w2) (rv2) -- (v2) -- (u2);
\end{tikzpicture}
\\
$(c)$ & $(d)$
\end{tabular}
\end{center}
\caption{Illustration for Lemma~\ref{lemma:local-bicolor}.}
\label{fig:lemma1}
\end{figure}
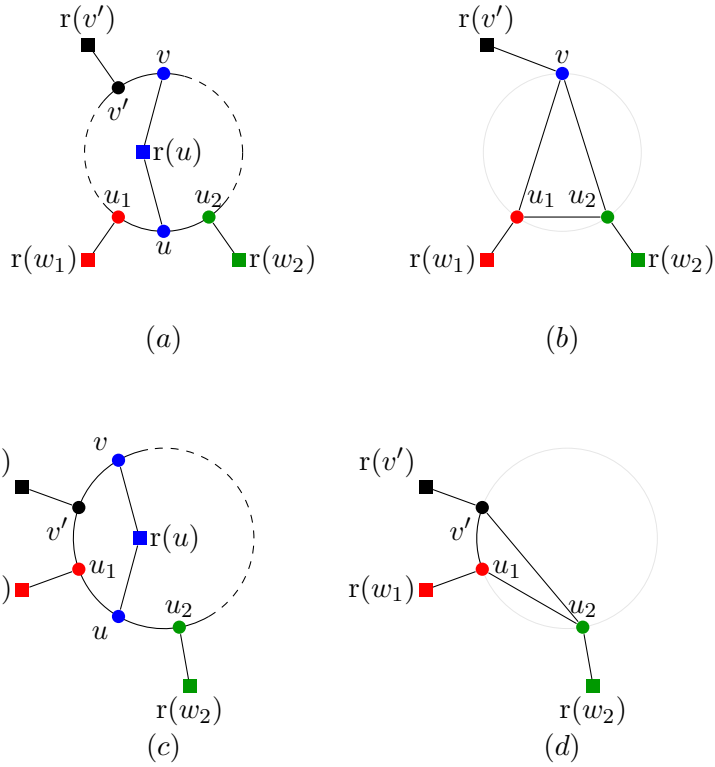

\begin{figure}[htbp]
\begin{center}
\begin{tabular}{cc}
\begin{minipage}{0.45\textwidth}
\begin{center}
\begin{tikzpicture}[x=0.7cm,y=0.7cm]
\draw[transparent,use as bounding box] (-3,-3) rectangle (3,3);
\draw[dashed] (0:2) arc (0:60:2);
\draw (60:2) arc (60:120:2);
\draw[dashed] (120:2) arc (120:180:2);
\draw (180:2) arc (180:240:2);
\draw[dashed] (240:2) arc (240:300:2);
\draw (300:2) arc (300:360:2);
\Vertex{blue} (o) at (0,0) {}; \draw (o) node[anchor=south east] {$\repr(u_1)$};
\Vertex{blue} (u1) at (90:2) {}; \draw (u1) node[anchor=south] {$u_1$};
\Vertex{blue} (u2) at (210:2) {}; \draw (u2) node[anchor=west] {$u_2$};
\Vertex{blue} (u3) at (330:2) {}; \draw (u3) node[anchor=east] {$u_3$};
\Term{red} (rv1) at (90:2.9) {};     \draw (rv1) node[anchor=south] {$\repr(v_1)$};
\Term{dgreen} (rv2) at (210:2.9) {}; \draw (rv2) node[anchor=north east] {$\repr(v_2)$};
\Term{black} (rv3) at (330:2.9) {};  \draw (rv3) node[anchor=north west] {$\repr(v_3)$};
\Vertex{red} (v1) at (110:2) {}; \draw (v1) node[anchor=east] {$v_1$};
\Vertex{red} (w1) at (70:2) {};
\Vertex{dgreen} (v2) at (230:2) {}; \draw (v2) node[anchor=north] {$v_2$};
\Vertex{dgreen} (w2) at (190:2) {};
\Vertex{black} (v3) at (350:2) {}; \draw (v3) node[anchor=west] {$v_3$};
\Vertex{black} (w3) at (310:2) {};
\draw (u1) -- (o) -- (u2);
\draw (o) -- (u3);
\draw (v1) -- (rv1) -- (w1);
\draw (v2) -- (rv2) -- (w2);
\draw (v3) -- (rv3) -- (w3);
\end{tikzpicture}
\end{center}
\end{minipage}
&
\begin{minipage}{0.45\textwidth}
\begin{center}
\begin{tikzpicture}[x=0.7cm,y=0.7cm]
\draw[transparent,use as bounding box] (-3,-3) rectangle (3,3);
\draw[very nearly transparent] (0,0) circle (2);
\Vertex{red} (u1) at (110:2) {}; \draw (u1) node[anchor=east] {$v_1$};
\Vertex{dgreen} (u2) at (230:2) {}; \draw (u2) node[anchor=north west] {$v_2$};
\Vertex{black} (u3) at (350:2) {}; \draw (u3) node[anchor=north east] {$v_3$};
\Term{red} (rv1) at (90:2.9) {};     \draw (rv1) node[anchor=south] {$\repr(v_1)$};
\Term{dgreen} (rv2) at (210:2.9) {}; \draw (rv2) node[anchor=north east] {$\repr(v_2)$};
\Term{black} (rv3) at (330:2.9) {};  \draw (rv3) node[anchor=north west] {$\repr(v_3)$};
\draw (rv1) -- (u1) -- (u2) -- (u3) -- (rv3) (u1) -- (u3) (rv2) -- (u2);
\end{tikzpicture}
\end{center}
\end{minipage}
\\
$(a)$ & $(b)$
\end{tabular}
\end{center}
\caption{Illustration for Lemma~\ref{lemma:bicolor}.}
\label{fig:bicolor}
\end{figure}
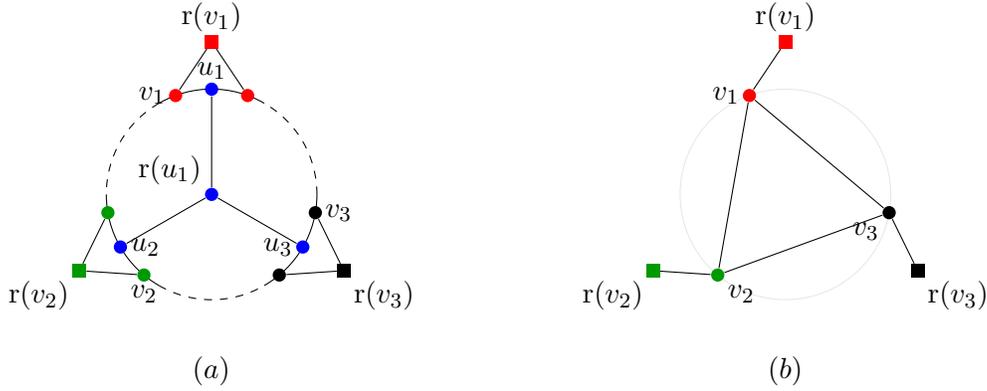

\begin{proof}
Let $u_1$ and $u_2$ be adjacent to $u$ in $C$, $v'$ is adjacent to $v$, and $u_1$, $u_2$ and $v'$ have distinct colors. First, suppose that $d_C(u,v) \geq 3$. There are two cases.

If $d_C(v',u) \geq 3$ (Figure~\ref{fig:lemma1}, $a$), let $G'$ be the graph obtained by contracting $u$ and $\repr(u)$ and by deleting all the vertices except $\repr(u_1)$, $\repr(u_2)$, $u_1$, $u_2$ and $v$. $G'$ is a net (Figure~\ref{fig:lemma1}, $b$).

If $d_C(v',u) = 2$ (Figure~\ref{fig:lemma1}, $c$), we may assume $v'u_1 \in E(C)$. Let $G'$ be the graph obtained from $G$ by contracting $u$, $v$ and $\repr(u)$ and deleting every other vertex except $v'$, $\repr(v')$, $\repr(u_1)$, $\repr(u_2)$, $u_1$ and $u_2$. Then $G'$ is a net (Figure~\ref{fig:lemma1}, $d$).

Now suppose that $d_C(u,v) = 2$. We may assume that $u_1$ is adjacent to $v$. Then the graph obtained from $G$ by contracting $\repr(u)$ and deleting every vertex except $u$, $v$, $u_1$, $u_2$, $v'$ and $\repr(u_1)$, is a net.
\end{proof}

\begin{lemma}\label{lemma:bicolor}
Every color is adjacent to at most two other colors, or $G$ contains a net vertex-minor.
\end{lemma}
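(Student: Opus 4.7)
The plan is to prove the contrapositive: assuming some color $c$ of $C$ is adjacent to at least three other colors, I will exhibit the net as a vertex-minor of $G$. By hypothesis there are three edges $u_iv_i$ of $C$ ($i=1,2,3$) with each $u_i$ of color $c$ and with $v_1,v_2,v_3$ of three pairwise distinct colors (so in particular the $v_i$ are distinct and none has color $c$). Write $t$ for the terminal of color $c$, and let $S$ be the set consisting of $t$ together with every $c$-colored vertex of $C$. By construction of $\mathcal{A}_n$ every $c$-colored cycle vertex is adjacent to $t$, so $S$ induces a connected subgraph of $G$.

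The key step is to contract every vertex of $S$ (the order does not matter by Lemma~\ref{lemma:commutativity}). A standard consequence of the definition of vertex contraction is that contracting a connected set $S$ turns $N_G(S)\setminus S$ into a clique. Since $v_i$ is adjacent in $G$ to $u_i\in S$, the three vertices $v_1,v_2,v_3$ lie in this clique and hence become pairwise adjacent. I then delete every remaining vertex except $v_1,v_2,v_3,\repr(v_1),\repr(v_2),\repr(v_3)$. The six-vertex graph that remains contains the triangle on $v_1,v_2,v_3$ together with the three pendant edges $v_i\repr(v_i)$, which were already present in $G$ by definition of $\mathcal{A}_n$.

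To finish, I must argue no other edges appear among these six vertices. Terminals are pairwise non-adjacent in $G$, and contracting $t$ with cycle vertices of color $c$ cannot produce an edge between two surviving terminals $\repr(v_j)$ because none of them was ever adjacent to $t$. For $i\neq j$, there is no edge $v_i\repr(v_j)$ in $G$ because each cycle vertex of $\mathcal{A}_n$ is adjacent to exactly one terminal (its own representative), and the contraction of $S$ only fuses vertices with $t$, never with any $\repr(v_j)$. Hence the resulting vertex-minor is exactly the net.

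The main obstacle is this final bookkeeping: one must both justify that the single clique-creation step really forges all three triangle edges $v_1v_2,v_2v_3,v_1v_3$ at once, and rule out stray adjacencies (terminal--terminal, or $v_i$--$\repr(v_j)$ for $i\neq j$) that might be introduced by the contractions. Both points hinge on the particular rules of $\mathcal{A}_n$: terminals form a stable set, each cycle vertex has a unique representative, and chords of $C$ join only vertices of the same color.
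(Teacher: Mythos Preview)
Your argument is correct, and in fact it is cleaner than the paper's. The paper splits into two cases according to whether the color $R$ contains a bicolored vertex: if it does, it appeals iteratively to Lemma~\ref{lemma:local-bicolor}; if not, it contracts only the three specific $c$-colored vertices $u_1,u_2,u_3$ together with $\repr(u_1)$. Your uniform move of contracting \emph{all} $c$-colored vertices (the terminal $t$ together with every $c$-colored cycle vertex) handles both cases at once, because that set is always connected through $t$ and always has $v_1,v_2,v_3$ in its neighbourhood. In particular your proof makes Lemma~\ref{lemma:local-bicolor} unnecessary for this lemma.

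One small wording point: when you argue that no edge $\repr(v_i)\repr(v_j)$ is created, you write ``because none of them was ever adjacent to $t$''. What you actually need (and what is true) is that no $\repr(v_j)$ is adjacent to \emph{any} vertex of $S$, since the neighbours of $\repr(v_j)$ all carry the color of $v_j$, which differs from $c$. With that clarification, your use of the Proposition (adjacency in the vertex-minor $\Leftrightarrow$ a path in $G$ through the contracted set) rules out all stray edges exactly as you claim.
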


\begin{proof}
Let $R$ be any color. By applying iteratively Lemma~\ref{lemma:local-bicolor}, if there is a vertex of color $R$ whose two consecutive vertices have distinct colors, then either $G$ contains a net, or $R$ is adjacent to exactly two colors.

Otherwise, each vertex in $C$ is consecutive to two vertices of the same color. Suppose that there are three vertices $u_1$, $u_2$, $u_3$ in $C$ of color $R$, such that their neighbors have three different colors. let $v_1$, $v_2$ and $v_3$ be the vertices following $u_1$, $u_2$, $u_3$ respectively in $C$ (Figure~\ref{fig:bicolor}, $a$). Then, by contracting $u_1$, $u_2$, $u_3$, $\repr(u_1)$ and deleting all the vertices except $v_1$, $v_2$, $v_3$ and their representants, we obtain a net (Figure~\ref{fig:bicolor}, $b$). 
\end{proof}

From now, we suppose that $G$ does not have a net minor. We define the \emph{graph of colors}, whose vertices are the colors, by the adjacency relation introduced above. By Lemma~\ref{lemma:bicolor}, the graph of color has maximum degree two. By connexity, it is either a cycle or a path. We index the colors from $1$ to $n$, following the order defined by the path or the cycle. Thus, each edge of $C$ has extremities of colors $i$ and $i+1$, or $1$ and $n$. We have the following immediate consequence.

\begin{lemma}\label{lemma:odd-colors}
Let $G$ be net-free. The number $n$ of colors is odd, and the graph of colors is a cycle.
\end{lemma}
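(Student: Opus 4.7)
The plan is to exploit a parity/winding argument around the odd cycle $C$. First I would check that the graph of colors is connected: every color appears on $C$ by definition, and each edge of $C$ provides an edge between the colors of its endpoints, so traversing $C$ traces a connected subgraph of the graph of colors that touches every vertex. Combined with Lemma~\ref{lemma:bicolor}, which bounds the maximum degree by $2$, this forces the graph of colors to be either a path or a cycle.

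Next I would rule out the path case. Label the colors $1,2,\ldots,n$ along the path. Every edge of $C$ joins a vertex of color $i$ to one of color $i\pm 1$, so walking once around $C$ yields a sequence of steps in $\{+1,-1\}$ whose sum is $0$ (since we return to the starting color). Hence the number of $+1$ steps equals the number of $-1$ steps, so $|C|$ is even, contradicting the hypothesis that $C$ is an odd cycle. Therefore the graph of colors must be a cycle.

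Finally, to show $n$ is odd, I would label the colors $1,\ldots,n$ around this cycle, so each step along $C$ shifts the color by $\pm 1 \pmod n$. Let $a$ be the number of $+1$ steps and $b$ the number of $-1$ steps; then $a+b=|C|$ is odd, and since traversing $C$ returns to the starting color, $a-b\equiv 0\pmod n$, i.e.\ $a-b=kn$ for some integer $k$. Since $a+b$ and $a-b$ share parity, $kn$ is odd, forcing $n$ to be odd.

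I do not expect a genuine obstacle here: the lemma is essentially a winding-number computation. The only subtlety to handle carefully is the justification that the graph of colors is connected, which is immediate from the construction of $\mathcal{A}_n$ but worth stating explicitly before invoking Lemma~\ref{lemma:bicolor}.
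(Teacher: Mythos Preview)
Your argument is correct and rests on the same parity idea as the paper's proof. The paper simply handles both failure modes at once by observing that if the graph of colors is a path or an even cycle, then assigning to each vertex of $C$ the parity of its color index yields a proper $2$-coloring of $C$, contradicting that $C$ is odd; your step-counting/winding formulation is an equivalent (if slightly longer) unpacking of that same observation.
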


\begin{proof}
Suppose not. Then $C$ has a proper $2$-coloring (following the parity of the colors), thus is even, contradicting the assumption. 
\end{proof}

\begin{lemma}\label{lemma:bicolored-vertices}
Let $G$ be net-free. Every color contains a bicolored vertex.
\end{lemma}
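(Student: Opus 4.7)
The plan is to interpret the odd cycle $C$ as a closed walk on the graph of colors---which is an $n$-cycle by Lemma~\ref{lemma:odd-colors}---and derive a contradiction by flow conservation. Orient $C$ cyclically to get a sequence $c_0, c_1, \ldots, c_{m-1}, c_m = c_0$ of colors in which consecutive terms differ by $\pm 1$ in $\mathbb{Z}/n\mathbb{Z}$. Let $u$ and $d$ count the $+1$- and $-1$-steps and let $k$ be the winding number of this walk around the graph of colors, so that $u + d = m$ and $u - d = kn$. Since $m$ is odd (it is the length of $C$) and $n$ is odd, $k$ must be odd, and in particular $k \neq 0$. Moreover, flow conservation at every vertex of the graph of colors forces the signed count $N_i^+ - N_i^-$ of traversals of each edge of the graph of colors to take the same value, which must then equal $k$.

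Now fix any color $R$ and argue by contradiction, assuming no $C$-vertex of color $R$ is bicolored. The key observation is that a vertex $v_j$ of $C$ is non-bicolored iff $c_{j-1} = c_{j+1}$, which is the same as saying that the two walk steps meeting at $v_j$ traverse the \emph{same} edge of the graph of colors, once in each direction. So under our hypothesis, at every color-$R$ vertex $v_j$ both walk steps lie on one single edge of the graph of colors incident to $R$ (either the edge to $R - 1$ or the edge to $R + 1$), contributing exactly one $+1$ and one $-1$ on that edge.

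Let $e$ denote the edge of the graph of colors joining $R - 1$ and $R$. Every traversal of $e$ by the walk has $R$ as one of its endpoints, hence occurs as one of the two walk steps at some color-$R$ vertex of $C$. By the previous paragraph, every color-$R$ vertex contributes to $e$ either nothing (its two walk steps both lie on the other edge incident to $R$) or one $+1$ paired with one $-1$. Hence the net flow on $e$ is zero, contradicting its having to equal the odd, nonzero value $k$.

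The step I expect to be trickiest is the bookkeeping at $e$ in the last paragraph: one has to simultaneously use the hypothesis "no bicolored color-$R$ vertex" to pair up all contributions, and check that every traversal of $e$ is indeed captured at some color-$R$ vertex, so that nothing is missed or double-counted. Everything else---the parity count giving $k$ odd, and the flow conservation giving a common edge value $k$---is routine once the walk-on-graph-of-colors viewpoint is adopted.
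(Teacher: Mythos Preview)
Your proof is correct and takes a genuinely different route from the paper's.

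The paper argues directly by a parity trick: it singles out color~$1$, splits the color-$1$ vertices into those with a color-$n$ neighbour ($U$) and those without ($U'$), and then attempts a $2$-colouring of $C$ by the parity of the colour index, adjusted on $U'$. Since $C$ is odd this $2$-colouring fails, and the paper locates the failure at an edge between $U$ and colour~$n$, yielding a bicolored vertex of colour~$1$.

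Your argument instead reads $C$ as a closed walk on the colour cycle $\mathbb{Z}/n\mathbb{Z}$, computes its winding number $k$ from $u+d=m$ and $u-d=kn$, uses the two oddness hypotheses to get $k$ odd (hence nonzero), and then observes that the net traversal count across every edge of the colour cycle equals $k$. The absence of a bicolored vertex in colour $R$ forces every visit to $R$ to be a ``bounce'', so the edge $\{R-1,R\}$ has net traversal $0\neq k$.

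Both proofs ultimately exploit that $C$ is odd and $n$ is odd, but they package this differently. The paper's argument is shorter and purely combinatorial once one guesses the right partition $(W,B)$. Your approach is more structural: it makes completely transparent \emph{why} both parities matter (one forces $m+kn$ even, the other then forces $k$ odd), and the flow-conservation step is a reusable mechanism rather than an ad~hoc colouring. Your version would also adapt readily to variants where the graph of colours is any cycle and one wants to locate ``turning points'' of a closed walk.
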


\begin{figure}
\begin{center}
\begin{tikzpicture}[x=0.8cm,y=0.8cm]
\fill[black,nearly transparent] (0.5,2) arc (0:180:0.5 and 1);
\fill[black,semitransparent] (-0.5,2) arc (180:360:0.5 and 1);
\fill[black,nearly transparent,rotate=72] (0,2) ellipse (0.5 and 1);
\fill[black,semitransparent,rotate=144] (0,2) ellipse (0.5 and 1);
\fill[black,nearly transparent,rotate=216] (0,2) ellipse (0.5 and 1);
\fill[black,semitransparent,rotate=288] (0,2) ellipse (0.5 and 1);
\Vertex{black} (a1) at (90:1.55) {}; 
\Vertex{black} (a2) at (90:2.15) {};
\Vertex{black} (a3) at (90:2.55) {};
\Vertex{black} (b1) at (18:1.5) {};
\Vertex{black} (b2) at (18:2) {};
\Vertex{black} (b3) at (18:2.5) {};
\Vertex{black} (c1) at (306:1.5) {};
\Vertex{black} (c2) at (306:2) {};
\Vertex{black} (c3) at (306:2.5) {};
\Vertex{black} (d1) at (234:1.5) {};
\Vertex{black} (d2) at (234:2) {};
\Vertex{black} (d3) at (234:2.5) {};
\Vertex{black} (e1) at (162:1.5) {};
\Vertex{black} (e2) at (162:2) {};
\Vertex{black} (e3) at (162:2.5) {};
\draw (a3) -- (b3) -- (a2) -- (b2) -- (c3) -- (b1) -- (c2) -- (d2) -- (c1) -- (d1) 
   -- (e1) -- (d3) -- (e2) -- (a1) -- (e3) -- (a3);
\draw (90:3.5) node {$1$};
\draw (162:3.5) node {$2$};
\draw (234:3.5) node {$3$};
\draw (306:3.5) node {$4$};
\draw (18:3.5) node {$5$};
\end{tikzpicture}
\end{center}
\caption{Illustration for Lemma~\ref{lemma:bicolored-vertices}, each ellipse represents a color.}
\label{fig:bicolored-vertices}
\end{figure}
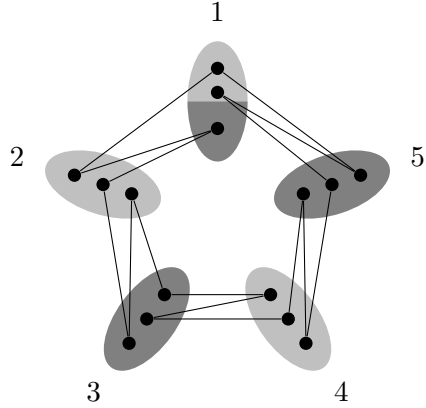

\begin{proof}
Without loss of generality, it is sufficient to prove that there is a bicolored vertex of color $1$. Let $U$ be the vertices of color $1$ adjacent to a vertex of color $n$ and $U'$ be the other vertices of color $1$. Let $W$ the vertices of $C$ having an odd color minus $U'$, and $B$ its complement in $V(C)$(see Figure~\ref{fig:bicolored-vertices}). $B$ does not contain two consecutive vertices of $C$, and $(W,B)$ cannot be a proper two coloring of $C$. Thus there is an edge in $C$ with both extremities in $W$. But this can only be an edge between $U$ and color $n$, hence there is a bicolored vertex in $U$. 
\end{proof}

\begin{lemma}\label{lemma:7colors}
If $G$ is net-free, the number $n$ of colors is at most $5$.
\end{lemma}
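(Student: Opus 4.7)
The plan is to argue by contradiction, assuming $G$ is net-free but $n \geq 7$, and constructing a net as a vertex-minor of $G$. By Lemma~\ref{lemma:odd-colors}, the color cycle has odd length $n \geq 7$, which lets me pick three colors $c_1, c_2, c_3$ pairwise at distance at least two in the color cycle; for instance $c_1, c_2, c_3 = 1, 3, 5$ when $n = 7$, with analogous spacings for larger odd $n$. By Lemma~\ref{lemma:bicolored-vertices}, I pick bicolored vertices $v_1, v_2, v_3$ of colors $c_1, c_2, c_3$, and write $s_i := \repr(v_i)$.

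The three vertices $v_1, v_2, v_3$ partition the cycle $C$ into three arcs $A_{12}, A_{23}, A_{31}$. I propose to produce a net on the six vertices $K := \{v_1, v_2, v_3, s_1, s_2, s_3\}$ by the following vertex-minor operation: delete every terminal other than $s_1, s_2, s_3$ and every vertex of $V(C) \setminus \{v_1, v_2, v_3\}$ whose color lies in $\{c_1, c_2, c_3\}$; contract every remaining vertex of $V(C)$. Writing $\tilde{C}$ for the contracted set, I then use the path characterization of vertex-minors to read off the edges of the minor. The edges $s_i v_i$ are present directly as edges of $G$. No edge $s_i s_j$ or $s_i v_j$ with $i \neq j$ appears: after deletion, the only $G$-neighbor of $s_i$ is $v_i$, which is retained in $K$ and therefore cannot appear as an inner vertex of a witness path; the pairwise non-adjacency of $c_1, c_2, c_3$ in the color cycle further ensures that no cycle-neighbor of $v_j$, being of color $c_j \pm 1$, is adjacent to $s_i$.

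The crux is to show each edge $v_i v_j$ is present in the minor, i.e., to exhibit a $(v_i, v_j)$-path in $G$ whose interior consists only of vertices in $\tilde{C}$---equivalently, whose interior vertices in $V(C)$ all have color outside $\{c_1, c_2, c_3\}$. The natural candidate is the arc $A_{ij}$, but its interior could meet a deleted forbidden-color vertex and thereby be broken. To overcome this, I plan to refine the choice of the triple: fix an arbitrary bicolored $v_1$ of color $c_1$, walk along $C$ in one direction, and let $v_2$ (respectively $v_3$) be the first bicolored vertex of color $c_2$ (respectively $c_3$) encountered. Iterated application of Lemma~\ref{lemma:local-bicolor}, together with the fact that each $C$-step shifts the color by exactly $\pm 1$ in the color cycle, should then confine the color walk between consecutive $v_i, v_j$ to the interval of the color cycle bounded by $c_i$ and $c_j$, so that the interior of $A_{ij}$ avoids the third forbidden color.

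This confinement is the main obstacle: one must rule out that the color walk detours toward the third forbidden color $c_k$ between the two boundary vertices $v_i, v_j$. The argument I envisage is inductive, using Lemma~\ref{lemma:local-bicolor} at every smooth transition to forbid the walk from crossing a color boundary without first hitting the chosen bicolored vertex at that color---which would contradict the ``first encountered'' selection of the triple. Once this confinement is established, the contraction yields the net as a vertex-minor of $G$, contradicting the hypothesis that $G$ is net-free.
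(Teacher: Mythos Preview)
Your overall strategy---pick bicolored vertices of colors $1,3,5$, keep them and their representants, and massage everything else to obtain a net---is exactly the paper's. The gap is in one small but decisive choice: you \emph{delete} the representants of the colors other than $c_1,c_2,c_3$, whereas the paper \emph{contracts} them. That single change is what makes your triangle argument hard, and in fact it does not go through as you sketch it.

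Here is the problem. With the representants $r_2,r_4,\dots$ deleted, the contracted set consists only of the cycle vertices whose color lies outside $\{c_1,c_2,c_3\}$. To get the edge $v_1v_2$ in the minor you need a $(v_1,v_2)$-path in $G$ whose interior avoids \emph{all three} forbidden colors, not just the third one $c_3$. Your confinement claim only addresses $c_3$, but the arc $A_{12}$ may very well revisit colors $c_1$ or $c_2$. For a concrete instance with $n=7$, take $C$ with color sequence $1,2,1,2,3,4,5,6,7$ (length $9$, all consecutive colors differing by $\pm 1$ in the color cycle). The unique bicolored vertex of color $1$ is the first vertex, and walking forward the first bicolored vertex of color $3$ is the fifth; the arc between them has interior colors $2,1,2$, so it passes through a (deleted) vertex of color $c_1$. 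Without a chord between the two color-$2$ vertices---and chords are not guaranteed---there is no $(v_1,v_2)$-path through contracted vertices, and your minor is missing a triangle edge. Lemma~\ref{lemma:local-bicolor} gives no leverage here: it only tells you that every vertex of color~$1$ has its $C$-neighbours coloured in $\{2,n\}$, which you already know from the colour-cycle structure, and it certainly does not prevent the walk from oscillating $1,2,1,2,\dots$ before reaching colour~$3$.

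The fix is immediate: contract the representants $r_i$ for $i\notin\{c_1,c_2,c_3\}$ as well. Each such $r_i$ is adjacent to every cycle vertex of colour~$i$, so it acts as a hub making all colour-$i$ vertices mutually reachable inside the contracted set. Then $v_1$ reaches $v_2$ via the colour-$2$ hub, $v_2$ reaches $v_3$ via the colour-$4$ hub, and $v_1$ reaches $v_3$ through the connected block of colours $6,7,\dots,n$ (consecutive in the colour cycle, hence linked by $C$-edges). The pendant analysis you already gave is unchanged, since each kept representant $s_i$ still has $v_i$ as its unique surviving neighbour. This is precisely the paper's proof, and it removes the need for any delicate arc-confinement argument.
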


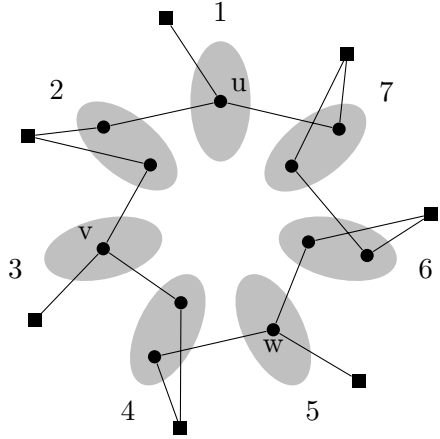
\begin{figure}
\begin{center}
\begin{tikzpicture}[x=0.8cm,y=0.8cm]
\fill[nearly transparent] (0,2) ellipse (0.5 and 1);
\fill[nearly transparent,rotate=51] (0,2) ellipse (0.5 and 1);
\fill[nearly transparent,rotate=103] (0,2) ellipse (0.5 and 1);
\fill[nearly transparent,rotate=154] (0,2) ellipse (0.5 and 1);
\fill[nearly transparent,rotate=206] (0,2) ellipse (0.5 and 1);
\fill[nearly transparent,rotate=257] (0,2) ellipse (0.5 and 1);
\fill[nearly transparent,rotate=308] (0,2) ellipse (0.5 and 1);
\Vertex{black} (a) at (90:2) {};\Vertex{black} (c) at (193:2) {};
\Vertex{black} (e) at (296:2) {};
\Vertex{black} (b1) at (141:1.5) {};\Vertex{black} (b2) at (141:2.5) {};
\Vertex{black} (d1) at (244:1.5) {};\Vertex{black} (d2) at (244:2.5) {};
\Vertex{black} (f1) at (347:1.5) {};\Vertex{black} (f2) at (347:2.5) {};
\Vertex{black} (g1) at (38:1.5) {};\Vertex{black} (g2) at (38:2.5) {};
\draw (90:3.5) node {$1$};\draw (141:3.5) node {$2$};\draw (193:3.5) node {$3$};
\draw (244:3.5) node {$4$};\draw (296:3.5) node {$5$};\draw (347:3.5) node {$6$};
\draw (38:3.5) node {$7$};
\draw (a) node[anchor=south west] {u};
\draw (c) node[anchor=south east] {v};
\draw (e) node[anchor=north] {w};
\Term{black} (r1) at (105:3.5) {};\Term{black} (r2) at (156:3.5) {};
\Term{black} (r3) at (208:3.5) {};\Term{black} (r4) at (259:3.5) {};
\Term{black} (r5) at (311:3.5) {};\Term{black} (r6) at (2:3.5) {};
\Term{black} (r7) at (53:3.5) {};
\draw (r1) -- (a) -- (b2) -- (r2) -- (b1) -- (c) -- (r3)
      (c) -- (d1) -- (r4) -- (d2) -- (e) -- (r5)
      (e) -- (f1) -- (r6) -- (f2) -- (g1) -- (r7) -- (g2) -- (a);
\end{tikzpicture}
\end{center}
\caption{Illustration for Lemma~\ref{lemma:7colors}. The representant of each color is drawn as a square.}
\label{fig:7colors}
\end{figure}

\begin{proof}
By contradiction. Let $u$, $v$ and $w$ be bicolored vertices of colors $1$, $3$ and $5$ respectively (see Figure~\ref{fig:7colors}). Contract every vertex of other colors, and delete every remaining vertex except $u$, $v$, $w$ and their representants. If $n > 5$, the $6$-vertices graph obtained by this way is a net.
\end{proof}

\begin{lemma}\label{lemma:bicol-consec}
If $G$ is net-free and $n=5$, there are no two consecutive bicolored vertices in $C$.
\end{lemma}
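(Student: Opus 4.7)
The plan is to establish the contrapositive: if $n=5$ and $C$ has two consecutive bicolored vertices $u$ and $v$, then $G$ contains a net as a vertex-minor. WLOG $u$ has color $1$ and $v$ has color $2$; since both are bicolored, $u$'s other $C$-neighbor $u'$ has color $5$ and $v$'s other $C$-neighbor $v'$ has color $3$. By Lemma~\ref{lemma:bicolored-vertices} there exists a bicolored vertex $w$ of color $4$, whose two $C$-neighbors $w'$ and $w''$ have colors $3$ and $5$ respectively. The choice of color $4$ for $w$ is critical: it is the unique third color for which $\{u,w\}$ shares color $5$ and $\{v,w\}$ shares color $3$ as common adjacent colors in the color cycle $1\text{-}2\text{-}3\text{-}4\text{-}5$, providing two ``bridges'' that will realize the two missing triangle edges.

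The vertex-minor $G'$ I exhibit is obtained by keeping $\{u,v,w,r_1,r_2,r_4\}$, contracting every $C$-vertex of color $3$ or $5$ together with the terminals $r_3$ and $r_5$, and deleting every remaining $C$-vertex (those of colors $1$, $2$, $4$ distinct from $u,v,w$). The three triangle edges are realized as follows: $uv$ is already a $C$-edge; the edge $uw$ is witnessed by the path $u\to u'\to r_5\to w''\to w$ in $G$, and $vw$ by the path $v\to v'\to r_3\to w'\to w$, each of whose inner vertices lies in the contraction set. The pendant edges $ur_1$, $vr_2$, $wr_4$ are already edges of $G$.

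It then remains to verify that no extraneous edge appears in $G'$. The main observation is that after the deletions, each $r_i$ (for $i\in\{1,2,4\}$) has only one remaining $G$-neighbor of color $i$, namely the designated internal, and no contracted neighbor of color $i$ exists at all; so any path in $G$ from $r_i$ to another kept vertex whose inner vertices lie in the contraction set would have to end at the designated internal via an inner kept vertex, which is forbidden. Combined with the fact that chords of $G$ join two same-colored vertices (and so either lie entirely inside the contraction set, where they contribute nothing to $G'$-adjacencies between kept vertices, or involve at least one deleted endpoint and therefore disappear), this rules out every potential extra edge, showing that $G'$ is exactly the net.

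The main obstacle will be this last step: carefully ruling out any spurious contracted path that could create an edge between two of the kept terminals or between an internal and a non-matching terminal. This may require invoking Lemma~\ref{lemma:local-bicolor} applied to the bicolored vertices $u$, $v$, and $w$ to restrict the $C$-neighborhoods of any duplicate color-$1$, $2$, or $4$ vertices that could persist in the construction, ensuring that all unwanted paths either pass through a deleted vertex or are forced to traverse a kept internal as an inner vertex, both of which are forbidden.
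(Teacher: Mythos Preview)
Your construction is exactly the one the paper uses: contract all vertices of colors $3$ and $5$ (terminals included), delete every remaining color-$1$, $2$, or $4$ vertex other than $u$, $v$, $w$ and their representants, and observe that the six surviving vertices form a net. Your verification of the triangle and pendant edges is correct, and your ``main observation'' paragraph already fully handles the absence of extraneous edges: since each $r_i$ with $i\in\{1,2,4\}$ is adjacent in $G$ only to vertices of color $i$, none of which lie in the contraction set, the only $G'$-neighbor of $r_i$ is its designated internal. The final paragraph's hedging about possibly needing Lemma~\ref{lemma:local-bicolor} is unnecessary; the argument is complete without it.
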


\begin{proof}
By contradiction. Let $u$ be a bicolored vertex, of color $1$, and $v$ its bicolored neighbor of color $2$. Let $w$ be a bicolored vertex of color $4$. Then, the graph obtained by contracting vertices of colors $3$ and $5$, and deleting all the vertices of colors $1$, $2$ and $4$, except $u$, $v$, $w$ and their representants, is a net. 
\end{proof}

\begin{lemma}\label{lemma:parity}
Suppose $G$ is net-free. The number of edges between any two color classes is zero or odd.
\end{lemma}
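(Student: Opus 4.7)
The plan is to split into cases according to whether the two colors are adjacent in the graph of colors, which by Lemma~\ref{lemma:odd-colors} is the cycle $1\,\text{--}\,2\,\text{--}\cdots\text{--}\,n\,\text{--}\,1$. Every chord of $G$ joins two vertices of the same color, so the edges of $G$ between distinct color classes are precisely the edges of $C$, and each such edge joins cyclically consecutive colors. Hence for any two non-consecutive colors the count is zero, and it suffices to show that the number $a_i$ of edges of $C$ between colors $i$ and $i+1$ (indices mod $n$) is odd for every $i$.

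For this I would fix a direction of traversal of $C$ and classify each edge between colors $i$ and $i+1$ as a \emph{forward} step ($i \to i+1$) or a \emph{backward} step ($i+1 \to i$), writing $a_i = p_i + q_i$. Because consecutive vertices of $C$ have distinct colors and the only colors adjacent to $i$ in the graph of colors are $i-1$ and $i+1$, each visit of the traversal to the class of color $i$ consists of a single vertex, so conservation at color $i$ gives
$$ p_{i-1} + q_i \;=\; p_i + q_{i-1}. $$
Rearranging, $p_i - q_i = p_{i-1} - q_{i-1}$, so the quantity $p_i - q_i$ is independent of $i$; call this common value $k$.

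Therefore $a_i = p_i + q_i \equiv p_i - q_i = k \pmod{2}$, so all the $a_i$ share a common parity. Summing over $i$ gives $\sum_i a_i = |E(C)|$, which is odd because $C$ is an odd cycle, so the common parity must be odd and every $a_i$ is odd. The only real step of the argument is the conservation identity; once one recognises that the cyclic sequence of colors along $C$ is a closed walk on $\mathbb{Z}/n$ with every step of size $\pm 1$, the remainder is a one-line parity computation, and there is no serious obstacle to anticipate.
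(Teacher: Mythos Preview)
Your argument is correct. It is, however, a genuinely different route from the paper's. The paper fixes one pair of adjacent colors, removes from $C$ all edges between them, and observes that each resulting subpath of $C$ has even length (its endpoints lie in the two chosen colors, and the color sequence along it is a walk on the path obtained from the color cycle by deleting that one edge, which is bipartite with the two chosen colors in the same part since $n$ is odd). Since $C$ is odd, the number of removed edges is odd.

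Your approach is more global: you read the color sequence along $C$ as a closed $\pm 1$ walk on $\mathbb{Z}/n\mathbb{Z}$, and the conservation identity $p_{i-1}+q_i=p_i+q_{i-1}$ extracts the (constant) winding number $k=p_i-q_i$, forcing all $a_i$ to share parity; the odd total length then pins this parity down. This proves slightly more in one stroke (all $a_i$ simultaneously, plus the structural constant $k$), at the cost of introducing the orientation and the $p_i,q_i$ bookkeeping. The paper's version is more local and needs no auxiliary quantities, but treats each pair separately and leaves the ``even length'' step a bit implicit. Both are short; yours makes the underlying reason (winding number parity) more visible.
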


\begin{proof}
Choose two adjacent colors, and remove from $C$ every edge between these two colors. Every path thus obtained has its both extremities in the same color class, or in the two chosen colors. So every path has an even length. As $C$ is odd, this proves that we removed an odd number of edges. 
\end{proof}

\begin{lemma}\label{lemma:odd-sequence}
Suppose $G$ is net-free. If $n=5$, there is a maximal sequence of consecutive edges in $C$ between any two given adjacent colors of length $2k+1$, for some $k \geq 1$.
\end{lemma}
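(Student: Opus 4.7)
The plan is to chain together Lemmas~\ref{lemma:bicol-consec} and~\ref{lemma:parity}. Fix two adjacent colors, say $i$ and $i+1$. Since the color graph is a $5$-cycle, there is at least one edge of $C$ between them, so the total number of such edges will be a positive odd integer by Lemma~\ref{lemma:parity}. I aim to find a maximal run of such edges in $C$ whose length is odd and at least~$3$.

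The key observation will be that both endpoints of any maximal run must be bicolored. Indeed, if $u$ is an endpoint of a maximal run between colors $i$ and $i+1$, say $u$ has color $i$, then its neighbor inside the run has color $i+1$, while its other neighbor in $C$ must have color different from $i+1$ (otherwise the run could be extended). Since the color graph is a $5$-cycle in which $i$ is adjacent only to $i-1$ and $i+1$, this other neighbor has color $i-1$, so $u$ is bicolored.

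Now if some maximal run had length~$1$, it would consist of a single edge $uv$ with both $u$ and $v$ bicolored and consecutive in $C$, contradicting Lemma~\ref{lemma:bicol-consec}. Hence every maximal run has length at least~$2$. Writing the run lengths as $m_1,\ldots,m_\ell$, we have $\sum_{j=1}^\ell m_j$ odd, so at least one $m_j$ must itself be odd; combined with $m_j\geq 2$, this forces $m_j\geq 3$, i.e., $m_j = 2k+1$ for some $k\geq 1$, as required.

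I do not anticipate a serious obstacle: the lemma will follow by pure bookkeeping from the already-established parity statement and the non-existence of consecutive bicolored vertices. The only point requiring a touch of care is the argument that the endpoints of a maximal run are bicolored, which quietly uses both the structure of the color graph (Lemma~\ref{lemma:odd-colors}) and the maximality of the run.
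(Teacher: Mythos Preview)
Your proof is correct and follows essentially the same approach as the paper's: you chain Lemma~\ref{lemma:bicol-consec} (every maximal run has length $\geq 2$) with Lemma~\ref{lemma:parity} (the total count is odd) to force some run to have odd length $\geq 3$. You are more explicit than the paper in justifying why Lemma~\ref{lemma:bicol-consec} applies, by observing that the endpoints of a maximal run are bicolored; the paper states this step tersely as ``each of these paths has length at least $2$ by Lemma~\ref{lemma:bicol-consec}.''
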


\begin{proof}
Consider the subpaths of $C$ obtained by keeping only the edges between colors $1$ and $2$. Either there is a net minor or each of these paths has length at least $2$ by Lemma~\ref{lemma:bicol-consec}. Then, by Lemma~\ref{lemma:parity}, there is a path of odd length, proving the lemma.
\end{proof}

\begin{lemma}\label{lemma:5colors}
If $n=5$, then $G$ is not a minimally excluded graph by vertex minor.
\end{lemma}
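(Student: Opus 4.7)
The plan is to exhibit a proper vertex-minor of $G$ that is itself excluded. If $G$ already contains a net as a vertex-minor we are done: the net lies in $\mathcal{A}_3$ and is a proper vertex-minor of $G$ (the net has $3$ terminals while $G$ has $5$). Otherwise $G$ is net-free and the earlier lemmas apply. I will choose $\mathcal{S}' = \{r_1, r_3, r_5\}$ and show that the auxiliary graph $G_{\mathcal{S}'}$ has an odd cycle; by the constructive argument sketched at the beginning of Section~\ref{sec:minor-charac}, $G$ then contains a vertex-minor in $\mathcal{A}_3$, which is proper since it uses only the three terminals of $\mathcal{S}'$.

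To describe $G_{\mathcal{S}'}$: writing $L_i$ for the set of light vertices (colors $1, 3, 5$) having a color-$i$ cycle-neighbor, the vertex set of $G_{\mathcal{S}'}$ is $V_1 \cup V_3 \cup V_5$, and the set $V(G) \setminus (N(\mathcal{S}') \cup \mathcal{S}')$ to be contracted equals $V_2 \cup V_4 \cup \{r_2, r_4\}$. In $G$, this contracted set splits into two connected components $V_2 \cup \{r_2\}$ and $V_4 \cup \{r_4\}$, so after contraction and deletion of same-color edges the edges of $G_{\mathcal{S}'}$ are exactly the two complete bipartite graphs $K_{V_1 \cap L_2,\,V_3 \cap L_2}$ and $K_{V_3 \cap L_4,\,V_5 \cap L_4}$, together with the direct color-$1$/color-$5$ edges of $C$.

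I show $G_{\mathcal{S}'}$ is not bipartite. By Lemma~\ref{lemma:parity}, $e_{23}$ is odd; decomposing it as $e_{23} = 2 n_{22} + n_{24}$ (according to whether a color-$3$ vertex has both color-$2$ cycle-neighbors or one of each color $2$ and $4$) forces $n_{24}$ odd, so there exists $b \in V_3 \cap L_2 \cap L_4$. Suppose $(A, B)$ were a bipartition; the clique $K_{V_1 \cap L_2,\,V_3 \cap L_2}$ forces WLOG $V_1 \cap L_2 \subseteq A$ and $V_3 \cap L_2 \subseteq B$, hence $b \in B$, and then $K_{V_3 \cap L_4,\,V_5 \cap L_4}$ is forced to split as $V_3 \cap L_4 \subseteq B$ and $V_5 \cap L_4 \subseteq A$.

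A second parity count produces the contradiction. A color-$1$/color-$5$ run of $C$ is a maximal subpath using only these two colors; $e_{12}$ is then the total number of color-$1$ endpoints (with multiplicity $2$ for singleton runs), giving $e_{12} = 2 s_1 + t_1$ with $t_1$ odd, and $t_1 = 2 r_{11} + r_{\mathrm{mix}}$ where $r_{\mathrm{mix}}$ counts the "mixed" runs (one color-$1$ endpoint and one color-$5$ endpoint, necessarily of even length), so $r_{\mathrm{mix}} \geq 1$. A mixed run $R$ of length $m$ has endpoints $v_1 \in V_1 \cap L_2 \subseteq A$ and $v_5 \in V_5 \cap L_4 \subseteq A$, but the internal path of $R$ consists of $m - 1$ direct color-$1$/color-$5$ edges, which alternate between $A$ and $B$ in any bipartition and, since $m-1$ is odd, force $v_1$ and $v_5$ into opposite parts — contradicting $v_1, v_5 \in A$. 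The most delicate point is this parity bookkeeping for run endpoints, crucially using the oddness statements of Lemma~\ref{lemma:parity}; once the odd cycle in $G_{\mathcal{S}'}$ is in hand, the rest reduces to the standard lift to a proper $\mathcal{A}_3$ vertex-minor.
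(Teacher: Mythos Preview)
Your argument is correct and takes a genuinely different route from the paper's. The paper proceeds constructively: invoking Lemma~\ref{lemma:odd-sequence} (which in turn rests on Lemma~\ref{lemma:bicol-consec}) to locate an odd maximal alternating subpath between two adjacent colours, it then contracts the two ``far'' colour classes and deletes almost everything else to exhibit an explicit $\mathcal{A}_3$ minor. You instead restrict to the three terminals $\mathcal{S}'=\{r_1,r_3,r_5\}$, write down the auxiliary graph $G_{\mathcal{S}'}$ directly (two complete bipartite blocks coming from the contracted components $V_2\cup\{r_2\}$ and $V_4\cup\{r_4\}$, plus the colour-$1$/colour-$5$ cycle edges), and use two parity counts --- both consequences of Lemma~\ref{lemma:parity} alone --- to force a bicoloured vertex $b\in V_3$ linking the two blocks and then a ``mixed'' run giving an odd path between two vertices already pinned to the same side. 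This is more conceptual and in fact bypasses Lemmas~\ref{lemma:bicol-consec} and~\ref{lemma:odd-sequence} entirely; your existence of a mixed run is a self-contained replacement for the latter. Two minor remarks: your use of ``length'' means number of vertices (so that mixed runs have even length and $m-1$ edges with $m-1$ odd), which is non-standard but internally consistent; and the reason the extracted minor is \emph{proper} is not really that it ``has three terminals'' (terminals are not intrinsic to the graph) but that $r_2,r_4$ lie in the contracted set $V\setminus(N(\mathcal{S}')\cup\mathcal{S}')$ and hence are absent from any minor produced by the Section~\ref{sec:minor-charac} construction.
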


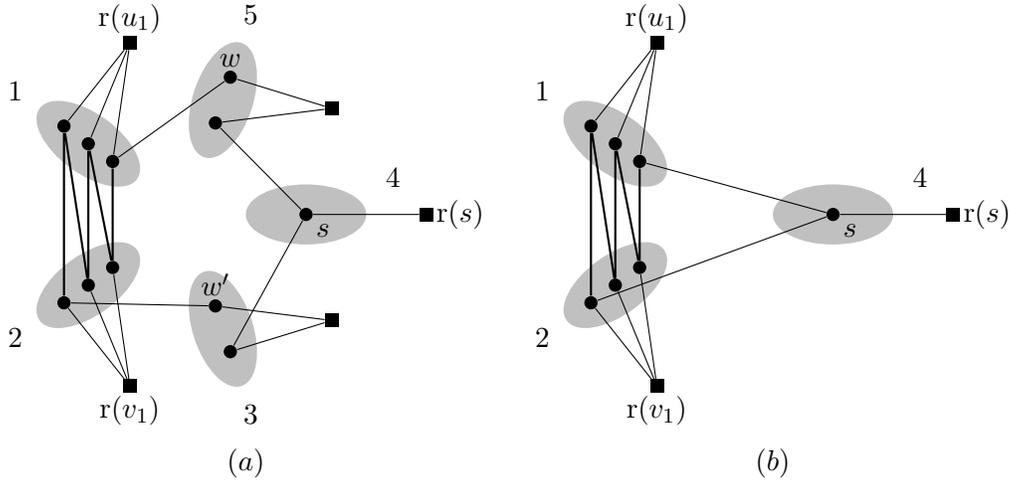
\begin{figure}
\begin{center}
\begin{tabular}{cc}
\begin{tikzpicture}[x=0.8cm,y=0.8cm]
\fill[nearly transparent] (2,0) ellipse (1 and 0.5);
\fill[nearly transparent,rotate=72] (2,0) ellipse (1 and 0.5);
\fill[nearly transparent,rotate=144] (2,0) ellipse (1 and 0.5);
\fill[nearly transparent,rotate=216] (2,0) ellipse (1 and 0.5);
\fill[nearly transparent,rotate=288] (2,0) ellipse (1 and 0.5);
\Vertex{black} (a1) at (144:1.5) {};\Vertex{black} (a2) at (144:2) {};
\Vertex{black} (a3) at (144:2.5) {};\Vertex{black} (b1) at (216:1.5) {};
\Vertex{black} (b2) at (216:2) {};\Vertex{black} (b3) at (216:2.5) {};
\Vertex{black} (c1) at (288:1.6) {};\Vertex{black} (c2) at (288:2.4) {};
\Vertex{black} (d1) at (0:2) {};
\Vertex{black} (e1) at (72:1.6) {};\Vertex{black} (e2) at (72:2.4) {};
\Term{black} (a) at (108:3) {};\Term{black} (b) at (252:3) {};
\Term{black} (c) at (324:3) {};\Term{black} (d) at (0:4) {};
\Term{black} (e) at (36:3) {};
\draw (e2) node[anchor=south] {$w$}; \draw (c1) node[anchor=south] {$w'$};
\draw (d1) node[anchor=north west] {$s$}; \draw (d) node[anchor=west] {$\repr(s)$};
\draw (a) node[anchor=south] {$\repr(u_1)$}; \draw (b) node[anchor=north] {$\repr(v_1)$};
\draw (d) -- (d1) -- (e1) -- (e) -- (e2) -- (a1) -- (a) -- (a2);
\draw[thick] (a1) -- (b1) -- (a2) -- (b2) -- (a3) -- (b3);
\draw (b2) -- (b) -- (b3) -- (c1) -- (c) -- (c2) -- (d1);
\draw (a) -- (a3) (b) -- (b1);
\draw (10:3.5) node {$4$};
\draw (72:3.5) node {$5$}; 
\draw (144:3.5) node {$1$}; 
\draw (216:3.5) node {$2$}; 
\draw (288:3.5) node {$3$}; 
\end{tikzpicture}
&
\begin{tikzpicture}[x=0.8cm,y=0.8cm]
\fill[nearly transparent] (2,0) ellipse (1 and 0.5);
\fill[nearly transparent,rotate=144] (2,0) ellipse (1 and 0.5);
\fill[nearly transparent,rotate=216] (2,0) ellipse (1 and 0.5);
\Vertex{black} (a1) at (144:1.5) {};\Vertex{black} (a2) at (144:2) {};
\Vertex{black} (a3) at (144:2.5) {};\Vertex{black} (b1) at (216:1.5) {};
\Vertex{black} (b2) at (216:2) {};\Vertex{black} (b3) at (216:2.5) {};
\Vertex{black} (d1) at (0:2) {};
\Term{black} (a) at (108:3) {};\Term{black} (b) at (252:3) {};
\Term{black} (d) at (0:4) {};
\draw (d1) node[anchor=north west] {$s$}; \draw (d) node[anchor=west] {$\repr(s)$};
\draw (a) node[anchor=south] {$\repr(u_1)$}; \draw (b) node[anchor=north] {$\repr(v_1)$};
\draw (d) -- (d1) -- (a1) -- (a) -- (a2);
\draw[thick] (a1) -- (b1) -- (a2) -- (b2) -- (a3) -- (b3);
\draw (b2) -- (b) -- (b3) -- (d1);
\draw (a) -- (a3) (b) -- (b1);
\draw (10:3.5) node {$4$};
\draw (144:3.5) node {$1$}; 
\draw (216:3.5) node {$2$}; 
\end{tikzpicture}\\
$(a)$ & $(b)$
\end{tabular}
\end{center}
\caption{Illustration for Lemma~\ref{lemma:5colors}}
\label{fig:5colors}
\end{figure}

\begin{proof}
If $G$ contains a net minor, it is clearly not minimal. Suppose it does not.
  
Let $u_1$, $v_1$, $u_2$, \ldots $u_k$, $v_k$ be a maximum subpath of $C$ of odd length between colors $1$ and $2$. By Lemma~\ref{lemma:odd-sequence}, we have $k \geq 2$. Let $w$ be the vertex of color $5$ adjacent to $u_1$, and $w'$ the vertex of color $3$ adjacent to $v_k$. Let $s$ be a bicolored vertex of color $4$ (see Figure~\ref{fig:5colors}, $a$).

Consider the graph obtained by contracting vertices of colors $3$ and $5$, and deleting all the other vertices except $u_1$, $v_1$, $u_2$, \ldots $u_k$, $v_k$, $s$, $\repr(u_1)$, $\repr(u_2)$ and $\repr(s)$. This graph (Figure~\ref{fig:5colors}, $b$) is composed of a cycle of length $2k+1$ plus three terminals $\repr(u_1)$, $\repr(u_2)$ and $\repr(s)$. It obviously checks the condition for being an excluded graph. Thus $G$ is not minimal.
\end{proof}

\end{appendix}


\begin{thebibliography}{99}

\bibitem{cherkasky} B.V. Cherkasky, \emph{A solution of a problem of multicommodity flows in a network}, Ekonomika i Matematicheskie Metody {\bfseries 13}, 143--151, 1977 (in Russian).

\bibitem{chudnovskyetal} M. Chudnovsky, J. Geelen, B. Gerards, L. Goddyn, M. Lohman, P. Seymour, \emph{Packing Non-Zero A-Paths In Group-Labelled Graphs}, Combinatorica {\bfseries 26}, 521--532, 2006.

\bibitem{papaseym}
E. Dahlhaus, D.S. Johnson, C.H. Papadimitriou, P.D. Seymour, 
    and M. Yannakakis, \emph{The complexity of multiterminal cuts},
 SIAM J. Comput. {\bfseries 23}, 4, 864--894, 1994.

\bibitem{fordfulkerson} L.R. Ford, D.R. Fulkerson, \emph{Maximal flow through a network}, Canadian Journal of Mathematics {\bfseries 8}, 399--404, 1956.

\bibitem{gallai} T. Gallai, \emph{Maximum-minimum S{\"a}tze und verallgemeinerte Faktoren von Graphen}, Acta Mathematica Hungarica {\bfseries 12}, 131--173, 1961.

\bibitem{karzanosov} A.V. Karzanov, M.V. Lomonosov, \emph{Systems of flows in undirected networks}, Mathematical Programming {\bfseries 1} (ed: O.I. Larychev), Institute for system studies, 59--66, 1978 (in Russian).

\bibitem{lovasz} L. Lovász, \emph{On some connectivity properties of Eulerian graphs}, Acta Mathematica Hungarica {\bfseries 28}, 129--138, (1976).

\bibitem{mader} W. Mader, \emph{{\"U}ber die Maximalzahl kreuzungsfreierH-Wege}. Archiv der Mathematik {\bfseries 31}, 387--402, 1978.

\bibitem{menger} K. Menger, \emph{Zur allgemeinen kurventheorie}, Fundamenta Mathematicae {\bfseries 10}, 96-115, 1927.

\bibitem{pap} G. Pap, \emph{Packing Non-Returning A-Paths}, Combinatorica {\bfseries 27}, 247--251, 2007.




\end{thebibliography}
\end{document}